\newcommand {\be}[1]{\begin{equation}\label{#1}}
\newcommand {\ee}{\end{equation}}
\newcommand {\bea}{\begin{eqnarray}}
\newcommand {\eea}{\end{eqnarray}}
\newcommand{\qed}{\hfill \rule{7pt}{7pt}}
\newtheorem{theorem}{Theorem}
\newtheorem{lemma}{Lemma}
\begin{document}

\title{A note on the rationing of divisible and indivisible goods in a general network}

	\author{
Shyam Chandramouli\thanks{IEOR Department, Columbia University, New York,
NY; {\tt sc3102@columbia.edu}} and
Jay\
Sethuraman\thanks{IEOR Department, Columbia University, New York,
NY;
{\tt jay@ieor.columbia.edu}}}

%\date{March 2013}

\maketitle

\begin{abstract}
The study of matching theory has gained importance recently with applications in Kidney Exchange, House Allocation, School Choice etc. The general theme of these problems is to allocate goods in a fair manner amongst participating agents. The agents generally have a unit supply/demand of a good that they want to exchange with other agents. On the other hand, Bochet et al.~\cite{bim,bims} study a more general version of the problem where they allow for agents to have arbitrary number of divisible goods to be rationed to other agents in the network. In this current work, our main focus is on non-bipartite networks where agents have arbitrary units of a homogeneous indivisible good that they want to exchange with their neighbors. Our aim is to develop mechanisms that would identify a fair and strategyproof allocation for the agents in the network. Thus, we generalize the kidney exchange problem to that of a network with arbitrary capacity of available goods. Our main idea is that this problem and a couple of other related versions of non-bipartite fair allocation problem can be suitablly transformed to one of fair allocations on bipartite networks for which we know of well studied fair allocation mechanisms.
\end{abstract}

\newpage

\section{Introduction}

The study of fair allocation on economic networks has gained a lot of importance in recent years with growing applications in many public policy domains like fair exchange of kidneys among patients~\cite{roth2005pairwise}, matching students to public schools~\cite{abdulkadirouglu2003school},  matching cadets in rotc~\cite{sonmezrotc} etc. 
The common theme in all these problems is that participating agents are in supply/demand of a unit indivisible good and the set of agents with whom they can share/recieve/supply this good 
is modeled by a link. These problems identify fair and strategyproof allocation mechanisms 
for agents in the network and thus is in very similar spirit of the classical marriage problem of Gale and Shapley~\cite{gale1962college}.

On the other hand, Bochet et al.~\cite{bim,bims} study the problem of fair division of a maximum flow in a capacitated bipartite network. This model
generalizes and studies the exchange of divisible goods on a economic networks where agents have arbitrary supply/demand constraints. Note that, the aforementioned problems were typically unit supply/demand model. 
The common feature in both these research is that
the associated market  is moneyless, so that fairness is achieved by
equalizing the allocation {\em as much as possible}. This last caveat is to account for additional
considerations, such as Pareto efficiency and strategyproofness, that may be part of the planner's objective.

A well-studied special case of the Bochet et al.~\cite{bim,bims} problem is that of allocating a single resource (or allocating the
resource available at a single location) amongst a set of agents with varying (objectively verifiable) claims on it. 
This is the special case when there is a single supply node that is connected to every one of the demand
nodes in the network by an arc of large-enough capacity. If the
sum of the claims of the agents exceeds the amount of the resource available, the problem
is a standard rationing problem (studied in the literature as ``bankruptcy'' problems or ``claims''
problems). There is an extensive literature devoted to such problems that has resulted in a
thorough understanding of many natural methods including the {\em proportional} method,
the {\em uniform gains} method, and the {\em uniform losses} method. A different view of
this special case is that of allocating a single resource amongst agents with single-peaked
preferences over their net consumption. Under this view, studied by Sprumont~\cite{spr},
Thomson~\cite{thomson} and many others, the goal is to design a mechanism for allocating
the resource that satisfies appealing efficiency and equity properties, while also eliciting
the preferences of the agents truthfully. The {\em uniform rule}, which is essentially an adaptation 
of the uniform gains method applied to the reported peaks of the agents, occupies a central
position in this literature: it is strategy-proof (in fact, group strategy-proof), and finds an
envy-free allocation that Lorenz dominates every other efficient allocation; furthermore,
this rule is also {\em consistent}. A natural two-sided version of Sprumont's model has agents initially 
endowed with some amount of the resource, so that agents now fall into two categories: someone
endowed with less than her peak is a potential demander, whereas someone endowed with more than
her peak is a potential supplier. The simultaneous presence
of demanders and suppliers creates an opportunity to trade, and the obvious adaptation of
the uniform rule gives their peak consumption to agents on the short
side of the market, while those on the long side are uniformly rationed (see 
\cite{KPS}, \cite{BJ}). This is again equivalent to a standard rationing problem because
the nodes on the short side of the market can be collapsed to a single node.
The model we consider generalizes this by assuming that the resource can only be transferred 
between certain pairs of agents. Such constraints are typically logistical (which supplier can
reach which demander in an emergency situation, which worker can
handle which job request), but could be subjective as well (as when a
hospital chooses to refuse a new patient by declaring red
status). This complicates
the analysis of efficient (Pareto optimal) allocations, because short demand
and short supply typically coexist in the same market. 

As mentioned earlier, Bochet et al.~\cite{bim, bims}. 
work with a bipartite network in both papers and assume that each node is populated
by an agent with single-peaked preferences over his consumption of the resource: thus,
each supply node has an ``ideal'' supply (its peak) quantity, and each demand node has
an ideal demand. These preferences are assumed to be private information, and Bochet
et al.~\cite{bim, bims} propose a clearinghouse mechanism that collects from each agent
only their ``peaks'' and picks Pareto-optimal transfers with respect to the reported peaks.
Further, they show that their mechanism is strategy-proof in the sense that it is a dominant
strategy for each agent to report their peaks truthfully. While the models in the two papers
are very similar, there is also a critical difference: in~\cite{bims}, the authors require that
no agent be allowed to send or receive any more than their peaks, whereas in~\cite{bim} the
authors assume that the {\em demands} must be satisfied exactly (and so some supply nodes
will have to send more than their peak amounts). The mechanism of Bochet et al.---the {\em egalitarian}
mechanism---generalizes
the uniform rule, and finds an allocation that Lorenz dominates all Pareto efficient allocations.
Later, Chandramouli and Sethuraman~\cite{cs,cs2} show that the egalitarian mechanism is in fact strongly invariant,
peak and link group strategyproof: it is a dominant strategy for any group of agents (suppliers or demanders) 
to report their peaks truthfully. 
Szwagrzak~\cite{sz1} studies the property of contraction invariance of an allocation rule: when 
the set of feasible allocations contracts such that the optimal allocation is still 
in this smaller set, then the allocation rule should continue to select the same allocation.
He shows that the egalitarian rule is contraction invariant. 
These results suggest that the egalitarian mechanism may be the
correct generalization of the uniform rule to the network setting.

Our research is motivated by these results. From the results above, the generalized model of allocation of divisible goods on bipartite networks is well understood. On the contrary, there has not been a much focus in the litreature on fair mechanisms on non-bipartite networks. Our contribution can be viewed in some sense a generalization of identifying a fair maximum matching  on general networks to 
that of identifying a fair maximum b-matching problem on non-bipartite networks. Our contributions parallel the contribution of Bochet et al.~\cite{bim,bims} in generalizing unit capacity models. 

Specifically, we are given a non-bipartite network $G = (N, E)$, and we think of $N$ as the set of agents involved in this network. Each arc $(i,j) \in E$ connects two participating agents and has capacity $u_{ij} \geq 0$. There is a single commodity (the resource) that is available at
each node and needs to be exchanged with the neighbors on the network: we assume that  node $i$ has
$b_i$ units of the resource. The capacity of an arc $(i,j)$
is interpreted as an upper bound on the direct transfer from supply node $i$ to demand node $j$. The agents derive utility whenever they exchange a good with their neighbors. 
 The
goal is to find a maximum "fair" exchange among participating agents , while also respecting
the capacity constraints on the arcs. We would describe in more detail our model in the later sections. We describe below the connections of this model with litreature.

\begin{itemize}
\item \textbf{Connection to Kidney Exchange Problem}: The kidney exchange problem of~\cite{roth2005pairwise} and the subsequent heaavy litereature, study a unit exchange problem between patients who are connected to their compatible donors through links in the network. Each agent needs exactly one unit of good (in this case, kidney). Since, not every patient is compatible with everyone else, we are naturally poised with the problem of not able to match every patient with his compatible donor, hence the goal of the study is to identify maximum number of matches as possible. In a combinatorics view, this problem can be viewed as one of finding a maximum matching on a non-bipartite network. When the set of maximum matchings is not unique, the mechanism designer is forced to pick one such maximum matching. From an economics perspective, the mechanism designer would like to satisfy certain fairness objectives like Lorenz Dominance, Pareto Optimality etc. and also incentive compatible conditions. Roth et al.~\cite{roth2005pairwise} construct the "Egalitarian Mechanism" - which is a randomized lottery mechanism over the set of maximum matchings. They identify a fair and strategyproof matching. As described above, our problem is one in which each agent $i$, in the networks has arbitrary units $b_{i}$ and derives utility with every exchange of that good with his/her neighbors in the network. Since the goods are indivisible, our model generalizes the kidney exchange problem to multiple demands and from a combinatorics perspective, generalizes to one one of identifying a maximum b-matching on a non-bipartite network.

\item \textbf{Ordinal transportation problem}: Ordinal transportation problem of Balinski and Yu~\cite{baiou2002erratum}, study the problem of identifying a stable b-matching on bipartite networks. We differ from their study in many ways. Our model is more general non-bipartite networks, our preference structure is not ordinal (agents are indifferent from whom they recieve the goods from, the utility structure is single peaked) and we are more interested in fairness notions like pareto optimality and envyfreeness versus the notion of stability studied by Balinski. 
\end{itemize}

\textbf{Our Contributions};
Our main contribution in this paper is a generalized egalitarian mechanism that is Lorenz dominant, pareto optimal, envyfree and strategyproof for agents participating in a non-bipartite network. We identify this in both the divisible and indivisible goods case.  We do so by transforming each of these non-bipartite networks to a suitable bipartite network. We solve the fractional matching problem on this bipartite network and show how to construct a lottery over integral b-matchings in the original network.

The rest of the paper is organized as follows: in Section~\ref{s:model1} we consider the divisble goods case and in Section ~\ref{s:model2} we consider the indivisble goods case. 
Finally in Section ~\ref{s:extensions}, we study extensions to problems with capacities, weighted matchings and general s-t networks, non-pairwise exchange

\section*{Definitions:}
Here we describe some of the common notions that are used in this paper. We are describing it explicitly here because it is the same terminology that we use across our models. \\

\noindent \underline{\textbf{Pareto Optimality:}} 
A feasible net transfer $x$ as defined in the previous section is Pareto Optimal if there
is no other allocation $x'$ such that every agent is weakly better off and atleast
one agent is strictly better off in it. In mathematical terms, if $R_{i}$ and $I_{i}$
denote the preference and indifference relations respectively for agent $i$, then 
\begin{equation}
 \{ \forall \hspace{2mm} i: \hspace{2mm} x_{i}'R_{i}x_{i} \hspace{2mm} \} \implies \{ \forall \hspace{2mm} i: \hspace{2mm} x_{i}'I_{i}x_{i}  \}
\end{equation}

\noindent \underline{\textbf{Lorenz Dominance:}}
A solution is Lorenz Dominant inside the
pareto optimal set if for any $z\in \mathbb{R}^{V}$, write $z^{\ast }$ for
the \textit{order statistics }of $z$, obtained by rearranging the
coordinates of $z$ in increasing order. For $z,w\in \mathbb{R}^{V}$ , we say
that $z$ \textit{Lorenz dominates} $w$, written $z$ $LD$ $w$, if for all $%
k,1\leq k\leq n$ 
\begin{equation*}
\sum_{a=1}^{k}z^{\ast a}\geq \sum_{a=1}^{k}w^{\ast a}
\end{equation*}%
Lorenz dominance is a partial ordering, so not every set, even convex and
compact, admits a Lorenz dominant element. On the other hand, in a convex
set $A$ there can be at most one Lorenz dominant element. The appeal of a
Lorenz dominant element in $A$ is that it maximizes over $A$ \textit{any}
symmetric and concave collective utility function.\smallskip \\

\noindent \underline{\textbf{No Envy:}} A rule $x \in  \mathcal{F}(G,b,u)$ satisfies \emph{No Envy} if for any preference profile $R \in \mathcal{R}^{S \cup D}$
and $i,j \in
S$ such that $x_{j}P_{i}x_{i}$, there exists no $x'$ such that 
\begin{eqnarray}
&x_{k} = x'_{k} \ \text{for all} \ k \in S \backslash \{ i,j \} ;  \text{for all} \ l \in D \ \text{and} 
&x'_{i}P_{i}x_{i} 
\end{eqnarray}

\noindent \textbf{\underline{Strategyproof}:} A rule $x$ on $(G,b,u)$ is \emph{strategyproof} if 
for all $R \in R^{S \cup D}$, $i \in V$ and $R_{i}' \in \mathcal{R}$
\begin{equation}
 x_{i}(R)R_{i}x_{i}(R'_{i},R_{-i}) \hspace{2mm} 
%\text{and} \hspace{2mm} y_{j}(R)R_{j}y_{j}(R_{j}',R_{-j})
\end{equation}

\section{Model 1: With preference indifference, divisible goods}
\label{s:model1}
In this section we consider the version of the problem where
the nodes of the network are populated by agents. 
Each node has a specific number of units of a particular good that they can exchange with the agents that they are connected with.
Thus, our problem becomes one of exchanging a single commodity
among the set of agents $V$ using the set $E$ of edges.

An exchange of the commodity among the agents is 
realized by a b-matching $f$,
which specifies the amount of
the commodity exchanged among the agents $i$ and $j$ using
the edge $(i,j) \in E$. The flow $f$ induces an allocation
vector for each agent as follows:
\begin{equation}
\text{for all }i\in V:x_{i}(f)=\sum_{j\in N(i)}f_{ij}; \label{1}
\end{equation}%
As we shall see in a moment, agents only care about
their {\em net} transfers, and not on how these transfers are distributed
across the agents on the other side.

In the following sections, we assume that the peaks of the agents are fixed, and focus our attention on mechanisms
that elicit preferences from the agents on their connectivity and maps the reported connectivity for each preference
profile to a unique b-matching.

To summarize: agents arrive to the market with a specified number of divisible goods and report the agents with whom they can have exchanges; They are preference homogeneous in the sense that they are indifferent between whom they exchange.  the allocation rule is applied to the graph $G$ with edge-capacities $u$, and the data $b$. Our focus will be on mechanisms in
which no agent has an incentive to misreport his compatible neighbors and also efficient.

\subsection*{Allocation Rules}

We define an allocation on the edges as $f_{ab} \hspace{2mm}, ab \in E$ as feasible, if $f_{ab} = f_{ba} \hspace{2mm} \forall ab \in E$ and the flow induced on the nodes by $f$ (as $x$ defined earlier) and for feasibility, $x_{a} \leq b_{a} \hspace{2mm} \forall a\in V$. Any rule which picks an allocation from this set is called a feasible allocation rule. In the rest of the section, we discuss the egalitarian rule for this model.

Given a network $(G,N,E)$, we make the following transformation to construct the bipartite
network $(G_{b},V_{b},E_{b})$. We represent $V_{b} = A \cup B$ where $A = V$ and $B=V$. are either sides of the network.  There is an edge between agent $i \in A$ and $j \in B$ only
if there is an edge $ij$ in the given network $G$. Connect the agents in $A$ to a supply node $s$ and the agents in $B$ to a sink node $t$. We refer to as a flow, any feasible
shipment of goods from the source node to the sink node.

%----- absence of lorenz dominance

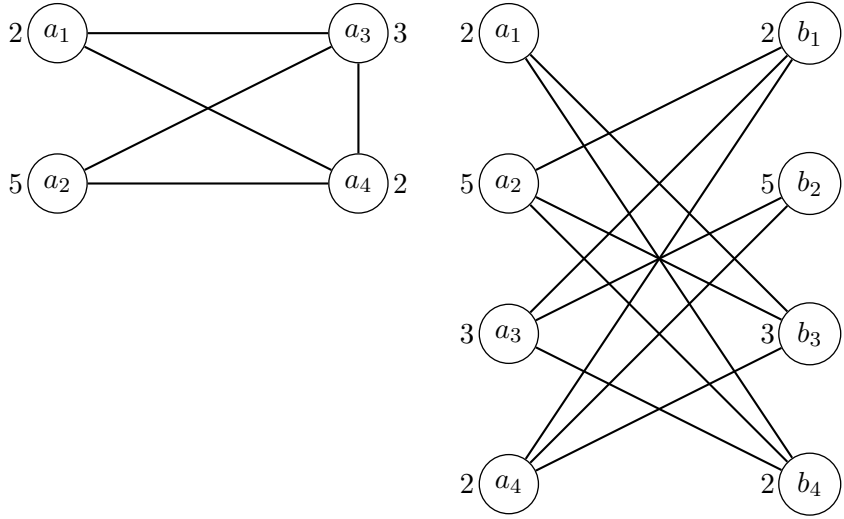
\begin{figure}
 \begin{center}
  \begin{tikzpicture}
   \GraphInit[vstyle=Normal]

%--------------------fig - left

%----suppliers bottom

	\Vertex[LabelOut,Lpos=180, x=0 ,y=0]{5}
 	\Vertex[x=0 ,y=0,Math,L=a_2]{a2}%VERTEXLABEL

 	\Vertex[LabelOut,Lpos=180, x=0 ,y=2]{2}
	\Vertex[x=0 ,y=2,Math,L=a_1]{a1}

%---------------- demanders bottom

	\Vertex[LabelOut,Lpos=0, x=4,y=0]{2}
 	\Vertex[x=4 ,y=0,Math,L=a_4]{a4}%VERTEXLABEL

	\Vertex[LabelOut,Lpos=0, x=4 ,y=2]{3}
 	\Vertex[x=4 ,y=2,Math,L=a_3]{a3}%VERTEXLABEL

     \Edges(a1,a3)
     \Edges(a1,a4)
     \Edges(a2,a4)
     \Edges(a2,a3)
     \Edges(a4,a3)

%--------------------fig - right

%----suppliers bottom
  	\Vertex[LabelOut,Lpos=180, x=6 ,y=2]{2}
	\Vertex[x=6 ,y=2,Math,L=a_1]{a1}

	\Vertex[LabelOut,Lpos=180, x=6 ,y=0]{5}
	\Vertex[x=6 ,y=0,Math,L=a_2]{a2}

	\Vertex[LabelOut,Lpos=180, x=6 ,y=-2]{3}
	\Vertex[x=6 ,y=-2,Math,L=a_3]{a3}

	\Vertex[LabelOut,Lpos=180, x=6 ,y=-4]{2}
	\Vertex[x=6 ,y=-4,Math,L=a_4]{a4}

%---------------- demanders bottom

\Vertex[LabelOut,Lpos=180, x=10 ,y=2]{2}
	\Vertex[x=10  ,y=2,Math,L=b_1]{b1}

	\Vertex[LabelOut,Lpos=180, x=10  ,y=0]{5}
	\Vertex[x=10  ,y=0,Math,L=b_2]{b2}

	\Vertex[LabelOut,Lpos=180, x=10  ,y=-2]{3}
	\Vertex[x=10  ,y=-2,Math,L=b_3]{b3}

	\Vertex[LabelOut,Lpos=180, x=10  ,y=-4]{2}
	\Vertex[x=10  ,y=-4,Math,L=b_4]{b4}

 \Edges(a1,b3)
 \Edges(a1,b4)

\Edges(a2,b1)
 \Edges(a2,b3)
 \Edges(a2,b4)

\Edges(a3,b1)
 \Edges(a3,b2)
 \Edges(a3,b4)

 \Edges(a4,b1)
 \Edges(a4,b2)
 \Edges(a4,b3)

	\end{tikzpicture}
   \caption{Transformation to bipartite network} 
   \label{fig:model_div}
  \end{center}
\end{figure}

\begin{lemma}
Every feasible flow in the modified network $G_{b}$ corresponds to a feasible 
allocation (exchange) in the original network $G$. BIMS's Egalitarian rule on the modified network $G_{b}$ results in a pareto optimal, lorenz dominance, envy free and peak strategy proof allocation for the agents in the original network $G$.
\end{lemma}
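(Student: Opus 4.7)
The plan is to exploit the $A \leftrightarrow B$ symmetry of $G_b$ and reduce each claimed property of the output allocation in $G$ to the corresponding property of the Bochet et al.\ egalitarian rule on $G_b$.

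First I would set up the correspondence between feasible exchanges in $G$ and symmetric feasible flows in $G_b$. The involution $\sigma$ that swaps $a_i$ with $b_i$ maps the edge $(a_i,b_j)$ to $(a_j,b_i)$ and preserves both node peaks (both copies of agent $i$ inherit $i$'s peak) and arc capacities (both copies inherit $u_{ij}$). Hence $\sigma(f)$ is feasible whenever $f$ is, and by convexity $\tilde f := (f+\sigma(f))/2$ is a feasible symmetric flow, i.e.\ $\tilde f_{a_i b_j} = \tilde f_{a_j b_i}$. Setting $g_{ij} := \tilde f_{a_i b_j}$ then produces a feasible exchange in $G$ (bilateral balance is automatic, and all peaks and capacities are respected), and conversely every feasible exchange lifts to such a symmetric flow. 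In particular $x_i(g) = x_{a_i}(\tilde f) = x_{b_i}(\tilde f)$, so feasible $G$-allocation vectors embed into feasible $G_b$-allocation vectors via the doubling map $x_i \mapsto (x_i,x_i)$.

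Next I would argue that the Lorenz-dominant allocation $x^\star$ chosen by the egalitarian rule on $G_b$ is $\sigma$-symmetric and therefore descends to an allocation $z^\star$ in $G$. Symmetry follows because the $G_b$-feasible set of allocation vectors is convex and $\sigma$-invariant, while the Lorenz-dominant element of a convex set is unique and therefore $\sigma$-fixed; symmetrizing any flow that attains $x^\star$ then yields a symmetric realization whose value is still $x^\star$. To transfer Lorenz dominance and Pareto optimality back to $G$, one observes that whenever a feasible $G$-vector $w$ Lorenz dominates $z^\star$ in the $G$-sense, the doubled vector $(w,w)$ Lorenz dominates $(z^\star,z^\star)$ in the $G_b$-sense: the inequality at even prefix index $2m$ is immediate, and at odd index $2m+1$ reduces to $W_m+W_{m+1}\ge Z^\star_m+Z^\star_{m+1}$, which is the sum of the $G$-Lorenz inequalities at $m$ and $m+1$. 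The same doubling argument handles Pareto dominance. Envy-freeness transfers analogously: a $G$-envy $z^\star_j>z^\star_i$ across an edge $(i,j)\in E$ yields the $G_b$-envy $x^\star_{b_j}>x^\star_{a_i}$ across the edge $(a_i,b_j)\in E_b$, contradicting envy-freeness of the egalitarian rule on $G_b$.

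The most delicate point, and the main obstacle, is strategy-proofness. Each agent reports only her incident links, but a misreport by $i$ in $G$ forces the coordinated misreport of \emph{both} copies $a_i$ and $b_i$ in $G_b$, since removing an edge $ij$ from $G$ deletes both $(a_i,b_j)$ and $(a_j,b_i)$ in $G_b$, one at each copy of $i$. Ordinary single-agent strategy-proofness on $G_b$ is therefore not enough. I would instead invoke the link group strategy-proofness of the egalitarian rule established in Chandramouli and Sethuraman~\cite{cs2}, which certifies that no coalition of bipartite agents---even one spanning both sides of the market---can jointly benefit from misreporting their links. Applied to the two-element coalition $\{a_i,b_i\}$ and recalling that $i$'s utility in $G$ equals the common value $x^\star_{a_i}=x^\star_{b_i}$, this yields the desired strategy-proofness in $G$.
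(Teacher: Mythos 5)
Your reduction is the same as the paper's: you double the network into $G_b$, symmetrize an arbitrary bipartite flow by averaging $f_{a_ib_j}$ with $f_{a_jb_i}$ to descend to a feasible exchange in $G$, and lift exchanges in $G$ by duplication --- this is exactly the paper's two-way correspondence, down to the formula $f_{a_ia_j}=(g_{a_jb_i}+g_{a_ib_j})/2$ and the identity $x_{a_i}=(y_{a_i}+y_{b_i})/2$. Where you differ is in the second half of the lemma. The paper dispatches all four properties in one sentence (``the second part of the statement follows, as Bochet et al.\ have established the fairness properties on bipartite networks''), whereas you actually carry out the transfer: the $\sigma$-invariance-plus-uniqueness argument showing the egalitarian allocation on $G_b$ is symmetric and hence descends to $G$ at all; the prefix-sum computation relating Lorenz comparisons of doubled vectors to Lorenz comparisons of the originals; and, most importantly, the observation that a deviation by agent $i$ in $G$ is a \emph{coordinated} deviation by the pair $\{a_i,b_i\}$ in $G_b$, so that single-agent strategyproofness of the bipartite egalitarian rule is not sufficient and one must invoke the group strategyproofness results of Chandramouli and Sethuraman~\cite{cs,cs2}. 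This last point is a real gap in the paper's own one-line justification that your proof correctly identifies and closes; note only that the lemma asks for \emph{peak} strategyproofness while you phrase the argument for link reports --- the two-copies obstruction and its resolution via group strategyproofness are the same in either case, since a peak misreport by $i$ changes the peaks of both $a_i$ and $b_i$. One small directional slip: to conclude that the descended allocation $z^\star$ Lorenz dominates every feasible $w$, run the implication from $G_b$ to $G$ (the even-index prefix inequalities for $(z^\star,z^\star)$ against $(w,w)$ already give $\sum_{a\le m}z^{\star a}\ge\sum_{a\le m}w^{\ast a}$); the direction you state only shows that no $w$ Lorenz dominates $z^\star$, i.e.\ Lorenz maximality rather than Lorenz dominance.
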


\begin{proof}
Consider a solution $(y_{a},y_{b},g_{ab})$ for the agents in the modified bipartite network where $y_{a},y_{b}$ refers to an allocation for agent $a \in A$ and $b\in B$ respectively. 
$g_{ab}$ is the flow on the edge $ab$. We will try to show that every feasible solution in the modified bipartite network maps to a solution in the original network and vice-versa. Once we have that, the second part of the statement follows (As Bochet et al.~\cite{bim,bims} have established the fairness properties of egalitarian mechanism in bipartite networks). 

Firstly, consider a feasible exchange in the original problem. Denote the solution by $(x_{a},f_{ab})$ for agent $a,b$ in the network. $x,f$ are node and edge allocations respectively. In the modified network, set $g_{a_{i}b_{j}} = g_{a_{j}b_{i}} = f_{a_{i}a_{j}} \hspace{2mm} \forall i,j \in V$. Clearly, this solution is feasible in the modified network and $y_{a_{i}} = y_{b_{i}} = x_{a_{i}} \hspace{2mm} \forall i \in V$.

Now, consider a solution $(y_{a},y_{b},g_{ab})$ in the modified bipartite network. Construct the following solution, $f_{a_{i}a_{j}} = (g_{a_{j}b_{i}} + g_{a_{i}b_{j}})/2 \hspace{2mm} \forall i,j \in V$. This also defines a node allocation in the original network with $x_{a_{i}} = (y_{a_{i}} + y_{b_{i}})/2 \hspace{2mm} \forall i \in V $. This is a feasible allocation since, $f_{ij} = f_{ji} \hspace{2mm} \forall ij \in E$ and $x_{i} \leq b_{i} \hspace{2mm} \forall i \in V$.
\qed
\end{proof}

Hence, the case of exchanging divisible goods in an economic network is done by a simple transformation into a bipartite network. The mechanism is also peak strategyproof if the preference profiles are private information of the agents. In the next section, we discuss the case of exchanging indivisible goods among agents in a network. We show that there is no mechanism that is peak strategyproof but we design a mechanism which is egalitarian in nature and retains many other attractive properties.

\newpage

%\textbf{Edges with capacities:} If the edges have capacities, the above transformation 
%to a bipartite network is not possible as there is constraint on edge flow. 

\section{Model 2: With preference indifference, Indivisible goods}
\label{s:model2}
In this section, we focus on allocation rules, which allows only integer allocations to the agents in the network. To summarize: agents arrive to the market with a specified number of indivisible goods and report 
the agents with whom they can have exchanges; They are preference homogeneous in the sense that 
they are indifferent between whom they exchange.  
the allocation rule is applied to
the graph $G$ with edge-capacities $u$, and the data $b$. Our focus will be on mechanisms in which no agent has an incentive to misreport his connectivity.

At this juncture, we would like to highlight an important difference with that of the
bipartite network structure here. The egalitarian rule of Bochet et al.~\cite{bim,bims} constructs a fair allocation for agents in a bipartite network with divisible goods. This rule is a generalization of the
well known Sprumont's~\cite{spr} uniform rule. On the other hand, when we are allocating a set
of indivisible goods among agents, Klaus et al.~\cite{ehlers2003probabilistic} proposed the probablistic uniform rule 
in the single agent model. Their main contribution is the fact that there is no net utility loss for 
agents in both divisible/indivisible models. The fractional part of an agents allocation (expected utility from this mechanism) is the probability with which he has a claim on an extra unit of good. 

Klaus et al.~\cite{ehlers2003probabilistic} mechanism is based on a simple idea that at each bottleneck point of the 
Sprumont's model, we randomize over all possible feasible allocations. Similarly, we define an egalitarian mechanism for indivisible goods. The idea is to randomize
over all possible feasible flows at each bottleneck. The utility for the agents in the divisible goods case is same as the expeced utility in the indivisible goods case. 

This is not the case in non-bipartite networks. The net utility (sum total of utilities for all agents) is not the same in divisible/indivisible goods case. For a simple example, consider
a traingular network with nodes $(a,b,c)$ each with peak = 1; In the divisible goods case, 
a pareto allocation would assign 1 unit to each node (0.5 units on each edge).  The net social utility is 3units in this case. If the goods are indivisible, only 1 of the 3 edges can be picked in any maximal allocation (pareto). The net social utility is 2units. In a fair allocation, we would pick each edge with equal probability, giving an allocation of $2/3$ for each agent. Hence, there is a clear gap in the achievable utility between divisible and indivisible goods
case in non-bipartite networks.

Hence, in this section we aim to identify an egalitarian mechanism for this model where the agents arrive with indivisible goods. We reduce it to a suitable bipartite network and apply the well-known fair allocation rules on this modified network.

Extensions of bipartite allocation rules: We note another property of the mechanisms which would be different when we move to non-bipartite networks. "Extension" is the property that when we restrict ourselves to bipartite networks, the mechanisms that we construct should boil down to the familiar fair allocation rules. Due to the "utility gap" we discussed earlier, the divisible and indivisible rules on non-bipartite networks would reduce to their respective counterparts on bipartite networks. Though on bipartite networks the rules look similar, they need to be adjusted for the "utility gap" when we move to non-bipartite networks. We would discuss this aspect in further detail in later sections

\subsection*{\underline{b-matchings}} 
b-matchings is a generalized notion of a matching type problems on networks. 
Given a network $G$ and a positive number $b_{i}$ for each node $i \in V$ and capacity 
$u_{e}$ for each edge $e \in E$
We define the 
\emph{u-capacitated b-matching} or (b,u) matching is a vector $x$ such that 
\begin{eqnarray}
\label{defn:bmatching}
&x_{i}(f)  \leq b_{i} \hspace{2mm} \text{for all} \hspace{2mm} i \in V \\
&0 \leq f_{ij} \leq u_{ij} \hspace{2mm} \text{for all} \hspace{2mm} ij \in E
\end{eqnarray}

In the case when all the $u_{e} = \infty$ and $b_{i} = 1$, we arrive at the problem 
of finding a matching in network $G$.  The set of all $x$ which satisfies [\ref{defn:bmatching}] is the
set of all feasible b-matchings. Clearly, a given graph $G$ can have more than 
one b-matching. From an operations research as well as economics point of view we will 
be interested in finding the \emph{maximum b-matching} among all feasible b-matchings. Here, 
the maximum b-matching is the one with the highest $\sum_{i} x_{i}$ among all the feasible b-matchings.

A maximum weight u-capacitated b-matching problem can be solved in strongly polynomial time by a reduction to the maximum weight b-matching problem. Following is a linear programming formulation of the maximum weight b-matching problem:

\begin{eqnarray}
&\text{Max} \hspace{2mm} \sum_{i} \sum_{j} X_{ij} \\
 & \text{subject to} \\
& \sum_{j \in N(i)} X_{ij} \leq b_{i} \hspace{2mm} \forall i \in V(G) \\
& X_{ij} \in Z
\end{eqnarray}

The maximum weight u-capacitated b-matching problem would encode the additional
constraint $X_{ij} \leq u_{ij} \hspace{2mm} \forall ij \in E(G)$. In this section, we assume $u_{ij} = \infty$ but we discuss the capacitated case in later sections.

Algorithms for finding a maximum weight b-matching: Cook and Cunningham~\cite{cook2011combinatorial} discuss
polynomial algorithms to solve the b-matching problem. We briefly discuss about the b-matching matroid.  This discussion follows from the well-known matching matroid formulation. For details on matching matroid, refer to cook~\cite{cook2011combinatorial}.

For an undirected graph $G(V,E)$ define, $\mathcal{I} = \{x_{vec} | x_{vec} \hspace{2mm} \text{is the node allocation from some b-matching of G}\}$ where $x_{vec} \in R^{V}$, the $i^{th}$ component of the vector, $x_{vec}^{i} = x_{i}$ where $x_{i}$ is the number of times node $i$ is matched in that particular b-matching. Then the pair, $(V,\mathcal{I}):=M$ is a matroid. We will refer to it as the b-matching matroid. A base in $M$ is a vector $x_{vec}$ that is generated by some maximum b-matching of $G$. 

The "rank" function : $2^{V} \rightarrow Z^{+} \cup \{0\}$  of the matroid $M$ is defined to be 
rank(S) = $\max_{I \subset S, I \in \mathcal{I}} |I| \hspace{2mm} \forall \hspace{2mm} S \subseteq V$. The rank function is submodular and in our context rank(S) can be interpreted as the size of a maximum b-matching in the set S. $P_{rank}$ refers to the polymatroid associated with this particular rank function and is given by \\
\begin{equation}
P_{rank} = \{x \in R^{V}|x\geq 0, x(S) \leq f(S), \hspace{2mm} \forall \hspace{2mm} S \subseteq V\} 
\end{equation}
where $x(S) = \sum_{i \in S} x_{i}$

From an Operations Research perspective, we have obtained an optimal solution to the maximum weight b-matching problem. But the aforementioned linear program can have many solutions. From an Economics perspective, we would like to pick a particular maximum weight b-matching from the optimal set such that the utility of the agents satisfy fairness constraints. Inoder to find such an allocation, we would understand the structure of maximum weight b-matchings for the uncapacitated problem in the rest of this section. Towards the end, we analyze the case with capacities.

\subsection*{\underline{Pareto Optimality}}
A feasible allocation $x \in \mathcal{A}(G)$ is \emph{Pareto Optimal} if for any other $x' \in \mathcal{A}(G)$ we have 
\begin{equation}
\{ \text{for all i:} \hspace{2mm}  x_{i}'R_{i}x_{i}\} \implies \{  \text{for all i:} \hspace{2mm}  x_{i}'I_{i}x_{i}  \}
\end{equation}

i.e. there is no other allocation $x'$ that is weakly better for every agent and strictly better for at least one agent in the allocation $x$. 

\begin{lemma}
The set of pareto optimal allocations is equivalent to the set of maximum weight b-matchings
\end{lemma}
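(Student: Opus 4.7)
The plan is to prove the two set-inclusions separately. For the easy direction, max weight $\Rightarrow$ Pareto optimal, observe that within the feasible region $x_i \leq b_i$ preferences are monotone increasing in $x_i$, so any allocation $x'$ that weakly Pareto-dominates $x$ with at least one strict improvement must satisfy $\sum_i x'_i > \sum_i x_i$, contradicting the maximality of $\sum_i x_i$.

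For the harder direction, Pareto optimal $\Rightarrow$ max weight, I would argue by contrapositive using augmenting paths, with the crucial step being a reduction from b-matching on $G$ to ordinary matching on an auxiliary graph $G'$. Construct $G'$ by replacing each $i \in V$ with a set $V_i$ of $b_i$ copies, and each edge $(i,j) \in E$ with the complete bipartite set of $b_i b_j$ edges between $V_i$ and $V_j$. Integer b-matchings $f$ on $G$ are in correspondence with matchings $M$ on $G'$: route each unit of $f_{ij}$ to a distinct unused pair in $V_i \times V_j$ (always possible since $\sum_j f_{ij} = x_i \leq b_i = |V_i|$). Under this correspondence, the allocation $x_i$ equals the number of copies of $i$ saturated by $M$, and a max weight b-matching in $G$ corresponds to a maximum matching in $G'$.

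Now assume $x$, with edge-function $f$, is not a max weight b-matching. The associated matching $M$ in $G'$ is not a maximum matching, so by Berge's theorem (valid for arbitrary, in particular non-bipartite, graphs) there exists an $M$-augmenting path between two unsaturated vertices $v_1 \in V_{i_1}$ and $v_2 \in V_{i_2}$. Flipping this path yields a matching $M'$ that saturates every vertex previously saturated by $M$ together with $v_1$ and $v_2$. Projecting $M'$ back to a b-matching $x'$ of $G$ gives $x'_k = x_k$ for every $k \neq i_1, i_2$, while $x'_{i_1} > x_{i_1}$ and $x'_{i_2} > x_{i_2}$ (the two strict increases coinciding into a single $+2$ jump if $i_1 = i_2$). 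Feasibility $x'_i \leq b_i$ is automatic because $|V_i| = b_i$. Hence $x'$ strictly Pareto-dominates $x$, contradicting Pareto optimality.

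The main obstacle is the non-bipartite structure of $G$, which rules out a clean decomposition of the symmetric difference $f \triangle f^{*}$ (for a max b-matching $f^{*}$) into alternating paths and cycles directly on $G$: odd structures such as blossoms obstruct a simple two-color argument. The vertex-expansion reduction to $G'$ is designed precisely to sidestep this, letting Berge's theorem in $G'$ handle the general case; the rest is just bookkeeping to translate the augmentation back to $G$.
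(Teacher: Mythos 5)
Your proof is correct, and at its core it rests on the same idea as the paper's: a Berge-type augmenting-path characterization of maximum b-matchings. The difference is in how each of you justifies the two directions. The paper argues both directions directly on the non-bipartite graph $G$, asserting without further justification that Pareto optimality rules out ``alternating paths of odd length connecting two unsaturated nodes'' and that the absence of such paths implies maximality --- which is exactly the step that is delicate for b-matchings on non-bipartite graphs. You identify that gap explicitly and close it with the vertex-expansion reduction (each node $i$ split into $b_i$ unit-capacity copies, each edge replaced by a complete bipartite gadget), so that Berge's theorem for ordinary matchings applies verbatim and the augmentation projects back to $G$ with the bookkeeping you describe; notably, the paper itself performs this very expansion later, in its proof of the Gallai--Edmonds lemma, but not here. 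Your other deviation is in the easy direction (maximum $\Rightarrow$ Pareto optimal): the paper again reasons about even-length alternating paths, whereas you simply observe that monotonicity of preferences on $[0,b_i]$ forces any Pareto improvement to strictly increase $\sum_i x_i$, contradicting maximality of the b-matching size $\tfrac{1}{2}\sum_i x_i$. Your version is both shorter and more rigorous on that point. The one thing worth stating explicitly in your write-up is that the correspondence between b-matchings of $G$ and matchings of $G'$ is size-preserving in both directions (every matching of $G'$ projects down, every b-matching lifts up, with equal cardinality), since that is what guarantees that non-maximality of $f$ in $G$ implies non-maximality of $M$ in $G'$ before you invoke Berge.
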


\textbf{Proof:} If $x$ denotes the utiliy profile of the agents in the network, then the size of the b-matching that produced this utility profile is $x/2$. As each exchange between 2 agents on a network adds an utility of one to each of those agents.

Suppose, we have an exchange of goods in the network that is produced by a maximum b-matching resulting in the allocation profile $x$. Since, it is a maximum b-matching, we cannot have more feasible exchanges in this network. There is no alternating path of odd length connecting two unsaturated nodes. Any other path of even length from an unsaturated node $a$, to another unsaturated node $b$ will result in another maximum b-matching with the utility of node $b$ strictly below the current solution. Hence, picking any maximum b-matching to the problem should result in an pareto optimal allocation for the agents.

Now, suppose $x$ is a pareto optimal allocation for the agents in the network, then there exists no allocation $y$ in the network such that every agent is weakly better under allocation $y$, and there is atleast one agent strictly better off under $y$. This implies, there is no alternating path of odd length connecting two unsaturated nodes. This implies the given solution is a maximum b-matching. \qed

The following generalization of Gallai and Edmonds theorem further helps us understand the structure of pareto optimal solutions

\subsection*{\underline{Gallai - Edmonds Decomposition}}

Gallai and Edmonds decompose any given network into smaller sub-components namely, over demanded, under demanded and pefectly demanded components. Bochet et al.~\cite{bim,bims} use this decmposition to establish the structure of pareto optimal allocations in bipartite networks. Roth et al.~\cite{roth2005pairwise} establish the structure of Pareto-efficient pairwise matchings using the Gallai-Edmonds decomposition (GED). We generalize the GED idea to our problem where the nodes have arbitrary peaks associated with them. Hence, we obtain a characterization of pareto efficient b-matchings in this problem. $M$ is the set of maximum b-matchings and let $\mu$ denote an arbitrary matching in the set $M$, $\mu(i)$ denotes the matched neighbors of agent $i$ in matching $\mu$. Setting up the notation
partition $V$ as $\{V^{U},V^{O},V^{P}\}$ such that 

\begin{eqnarray}
V^{U} &=& \{  i \in V: \exists \mu \in M \hspace{2mm} s.t. \hspace{2mm} i \in \mu(i)   \} \\
V^{O} &=& \{ i \in V\backslash V^{U}: \exists \text{set of agents} \hspace{2mm} \tilde{j} \in V^{U} s.t.  \sum_{j\in\tilde{j}} n_{i,j} = b_{i} \}, \\
V^{P} &=& V \backslash (V^U \cup V^O)
\end{eqnarray}

where $n_{i,j} = 1$ if $i,j$ is matched in a feasible solution.
$V^{U}$ is the set of agents unmatched ($x_{i} < b_{i}$) in atleast one maximum b-matching. 
$V^{O}$ is the set of agents completely matched ($x_{i} = b_{i}$) in every maximum b-matching and have atleast 1 neighbor in $V^{U}$. $V^{P}$ is the set of agents who are again perfectly matched but does not have a link with any agent in $V^{U}$.

\begin{figure}
\begin{center}
\begin{tikzpicture}

\Vertex[x=0,y=0]{$s_1$}

\Vertex[x=1,y=-2]{$s_2$}

\Vertex[x=2,y=0]{$s_3$}

\Vertex[x=4,y=0]{$s_4$}

\Vertex[x=5,y=0]{$s_5$}

\Vertex[x=4,y=-3]{$s_6$}

\Vertex[x=6,y=-3]{$s_7$}

\Vertex[x=7,y=0]{$s_8$}

\Edges($s_1$,$s_2$)
\Edges($s_2$,$s_3$)
\Edges($s_3$,$s_1$)
\Edges($s_6$,$s_2$)
\Edges($s_6$,$s_4$)
\Edges($s_6$,$s_5$)
\Edges($s_7$,$s_8$)

\node at (-.5,0.5){$2$};
\node at (1.5,0.5){$2$};
\node at (0.5,-1.5){$3$};

\node at (3.5,.5){$4$};
\node at (4.5,.5){$4$};
\node at (3.5,-3.5){$5$};
\node at (5.5,-3.5){$2$};
\node at (6.5,.5){$4$};

% perfectly demanded
\Vertex[x=0,y=-5]{$s_{9}$}
\Vertex[x=0,y=-7]{$s_{10}$}
\Vertex[x=2,y=-7]{$s_{11}$}
\Vertex[x=2,y=-5]{$s_{12}$}
\Edges($s_{9}$,$s_{10}$)
\Edges($s_{10}$,$s_{11}$)
\Edges($s_{11}$,$s_{12}$)
\Edges($s_{12}$,$s_{9}$)
\Edges($s_{9}$,$s_{11}$)
\Edges($s_{10}$,$s_{12}$)
\node at (-0.5,-4.5){$2$};
\node at (-0.5,-6.5){$2$};
\node at (1.5,-6.5){$2$};
\node at (1.5,-4.5){$2$};

\Vertex[x=5,y=-5]{$s_{13}$}
\Vertex[x=4,y=-7]{$s_{14}$}
\Vertex[x=6,y=-7]{$s_{15}$}
\Edges($s_{13}$,$s_{14}$)
\Edges($s_{15}$,$s_{14}$)
\Edges($s_{13}$,$s_{15}$)
\node at (4.5,-4.5){$2$};
\node at (3.5,-6.5){$2$};
\node at (5.5,-6.5){$2$};

\Edges($s_{13}$,$s_7$)
\Edges($s_{12}$,$s_6$)
\end{tikzpicture}
\end{center}
\caption{GED of a non-bipartite network with arbitrary peaks; In this figure, $\{s_{6},s_{7}\} \in V^{O}, \{s_{1},s_{2},s_{3},s_{4},s_{5},s_{8}\} \in V^{U}, \{s_{9},s_{10},s_{11},s_{12},s_{13},s_{14},s_{15}\} \in V^{P}$}
\label{ged_general}
\end{figure}
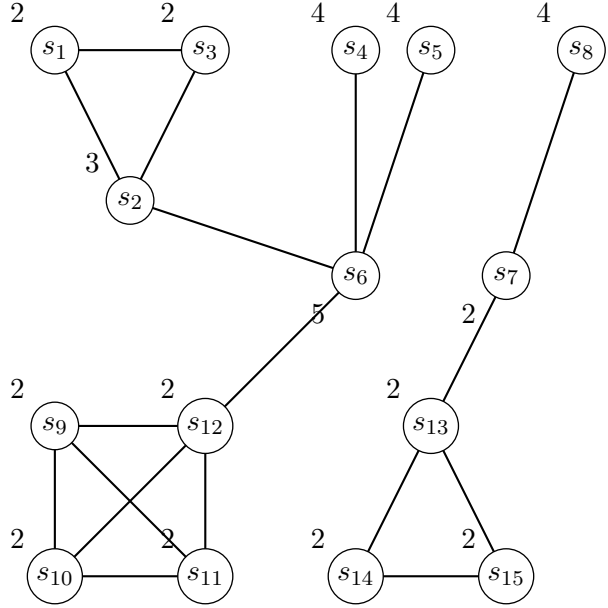

Let $I \subseteq V$ and $N(I) = \{j| ij \in E, i \in I\}$. Then $(I,N(I))$ is a reduced sub problem of the original problem. 

\begin{lemma}
Let $I= V \backslash V^{0}$ and let $\mu$ be a pareto-efficient matching for the original problem ($\mu(i)$ is the set of neighbors that is matched to an agent $i$), then
\begin{itemize}
\item For any agent $i \in V^{0}, \mu(i) \subseteq V^{U}$
\item For any even comoponent $(J,N({J}))$ such that $J \subseteq V^{P}$ and for any agent $i \in J, \mu(i) \in J\backslash i$
\item For any odd comoponent $J (\|J\| >= 2)$, the maximum size of b-matching within every odd comoponent is $\Sigma_{j=1}^{|J|} b_{j} - 1$. Moreover, for any odd comoponent $(J,R_{J})$, either
\begin{itemize}
\item one and only one agent $i \in J$ is matched with a agent in $V^{O}$ under the pareto efficient matching $\mu$ whereas all remaining agents in $J$ are matched within so that $\mu(j) \in J \backslash \{i,j\}$ for any agent $j \in J \backslash \{i\}$ or
\item one agent $i \in J$ remains unsaturated under the Pareto-efficient matching $\mu$ whereas all remaining agents in $J$ are saturated so that $mu(j) \in J \backslash \{i,j\}$ for any agent $j \in J \backslash {i}$
\end{itemize}

\end{itemize}

\end{lemma}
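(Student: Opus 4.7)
The plan is to deduce all three parts from Lemma~2 (identifying pareto-efficient allocations with maximum b-matchings), combined with a careful application of alternating-walk arguments in the symmetric difference of two maximum b-matchings. This is the standard template for proving Gallai--Edmonds type theorems, adapted here to the b-matching setting.

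For Part~1, I would argue by contradiction. Suppose $i \in V^{O}$ is matched by $\mu$ to some $j \notin V^{U}$. By the definition of $V^{O}$, there is a neighbor $k \in V^{U}$ of $i$, and by the definition of $V^{U}$ there is a maximum b-matching $\mu'$ in which $x_k(\mu') < b_k$. I form the multigraph $\mu \triangle \mu'$; since $\mu$ and $\mu'$ are both maximum, its connected components decompose into alternating paths and cycles in the b-matching sense. Starting from $k$ I trace a $\mu$-$\mu'$ alternating walk and use it to build a new maximum b-matching $\mu''$ in which $k$ is saturated at the cost of reducing the saturation of some other vertex along the walk. Because $i \in V^{O}$ must remain fully saturated in every maximum b-matching, the deficit cannot land on $i$, so it must be routed through $j$, forcing $j \in V^{U}$ — contradiction.

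For Part~2, Part~1 does most of the work. Vertices of $V^{P}$ have, by definition, no neighbors in $V^{U}$; and by Part~1 no vertex of $V^{O}$ matches a vertex of $V^{P}$ in $\mu$. Hence all of $\mu(i)$ for $i \in J \subseteq V^{P}$ lies inside $V^{P}$, and (since connected components form the atoms of such matching) inside $J$ itself. Since $J$ is even in the sense that it admits a perfect internal b-matching, saturation of every $i \in J$ is forced by edges of $J$, giving $\mu(i) \subseteq J \setminus \{i\}$.

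For Part~3, the bound $\sum_{j \in J} b_j - 1$ is a parity argument: each internal edge contributes $2$ to $\sum_{i \in J} x_i$, so the total internal saturation has the same parity as $\sum_{j \in J} b_j$, which is odd, ruling out a perfect internal b-matching. To obtain the dichotomy, I note that exactly one vertex of $J$ must carry the resulting deficit. If that deficit vertex $i$ is unsaturated we are in the second case. Otherwise, the missing unit of saturation must be supplied by an edge leaving $J$; by Part~1 (and the fact that $J \subseteq V \setminus V^{O}$), any such edge from $J$ to $V \setminus J$ must land in $V^{O}$. Uniqueness follows from maximality of $\mu$: if two vertices of $J$ were simultaneously deficient inside $J$, the odd-component structure would allow an internal rotation saturating one additional unit, contradicting the global maximality of $\mu$.

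The main obstacle I anticipate is the alternating-walk argument in Part~1. For ordinary matchings this is routine, but for b-matchings one must keep track of multiplicities in $\mu \triangle \mu'$ and justify that an alternating walk from a $\mu$-deficient vertex can indeed be used to construct another maximum b-matching in which the deficit has migrated to a prescribed endpoint. I would isolate this as a small auxiliary lemma — essentially the Tutte--Berge/$f$-factor analog used in b-matching theory — and invoke it in Part~1; the remaining steps of the proof are then largely combinatorial bookkeeping on the Gallai--Edmonds partition.
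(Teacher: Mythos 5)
Your proposal takes a genuinely different route from the paper. The paper does not re-derive Gallai--Edmonds from scratch in the $b$-matching setting: it splits each node $i$ into $b_{i}$ unit-peak clones (clones of the same node non-adjacent, clones of adjacent nodes fully joined), invokes the classical unit-capacity GED on the resulting graph (citing Roth et al.\ for that case), argues by the symmetry of the clones that all copies of a node land in the same GED class and the same component, and shrinks the clones back to recover every part of the statement, including the $\sum_{j\in J} b_{j}-1$ bound for odd components. Your plan is instead the direct alternating-structure proof. That template is legitimate, but as written it leaves the two genuinely hard steps unproven, so I would not call it a complete proof.

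First, in Part~1 the deficit-migration argument does not go through as stated: after tracing an alternating walk from the deficient vertex $k$ into $i$, the deficit can be discharged along \emph{any} $\mu$-edge at $i$, not necessarily along $ij$; if $i$ has some other $\mu$-neighbor already in $V^{U}$, the walk can terminate there and you learn nothing about $j$. What you actually need is that \emph{every} $\mu$-neighbor of $i$ can be made deficient in some maximum $b$-matching, which is precisely the auxiliary lemma you defer; in the classical development this is not obtained from a single alternating walk but falls out of the full structure theorem (factor-criticality of the components of $V^{U}$ together with a deficiency count of Tutte--Berge type). Second, in Part~3 your parity argument yields only the \emph{upper} bound $\sum_{j\in J} b_{j}-1$ on the internal saturation of an odd component; that this bound is attained, and attained with the one-unit deficit placeable at any prescribed vertex of $J$ (which your uniqueness / ``internal rotation'' step silently uses), is exactly the factor-criticality property at the heart of the theorem, and it is nowhere established in your sketch. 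Both gaps disappear if you adopt the paper's clone-splitting reduction, which imports these facts wholesale from the unit-capacity case; alternatively, you must state and prove the $b$-matching analogue of factor-criticality before Parts~1 and~3 can be closed.
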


\begin{proof}
The Gallai-Edmonds decomposition precisely proves the above lemma when all the peaks $b_{j} = 1$. Roth et al.~\cite{roth2005pairwise} which this paper builds upon has a statement when the nodes have unit capacity. Now, lets construct a matching instance of the given b-matching problem. 
From the given graph $(G,V,E)$ construct the following graph $(G',V',E')$:
Create $b_{i}$ duplicate copies of node $i$ with unit peak each. Label these nodes as $(i^{1},i^{2}....i^{b_{i}})$.  In the graph $G'$, nodes $i^{k}, (k \in 1,2,..b_{i} )$ and $j^{l}, (l \in 1,2,..b_{j} )$ is connected by an edge if $ij$ is conencted in the original graph $G$. There is no edge between $i^{k}$ and $i^{r}, (k,r \in (1,2,...b_{i}))$ in $G'$. Now, we have a graph with unit peaks. Applying the GED on this unit graph would establish the result. 

We just stress a bit on the third statement.  When $(|J| = 1)$ the odd component has only 1 element, then there is no b-matching within that component. When $|J|>=2$, we highlight that the maximum size of any matching within every odd component in a unit capacity graph is one less than the number of nodes in the component (refer to Roth et al.~\cite{roth2005pairwise} for a proof). By symmetry, all the duplicate (identical) copies of a node will be in the same component. Hence, we can shrink these duplicate nodes into the parent node with original peak. Thus, from the decomposition above, the sum of the peaks of the agents in a odd component is $\Sigma_{j=1}^{|J|} b_{j}$ and the result follows. \qed
\end{proof}

\section*{\underline{Allocation Rules}}
\subsection*{\underline{Utility Profile:}} A utility profile $U \in R^{V}$ is said to be a feasible utility profile if the profile is an outcome of a feasible b-matching in the network. Since, we are looking for pareto optimal solutions, the outcome of our mechanisms is an utility profile induced by a maximum b-matching in the network.

A mechanism is \emph{deterministic} if for a given network, it picks a maximum b-matching as an outcome from the set of maximum b-matchings. Roth et al.~\cite{roth2005pairwise} study priority mechanisms in the context of deterministic mechanisms. They also show establish that the randomized "Egalitarian"  mechanism is superior in the sense that it not only retains the efficiency, strategyproof properties of deterministic mechanism, but it also produces an outcome which is envy free and strongly efficient in the sense of lorenz dominance. The mechanism of Roth et al. is a "lottery" mechanism, that randomly picks a maximum b-matching from the pareto optimal set. The distribution or lottery is chosen in such a way that the aforementioned economic properties are exhibited by the mechanism. We define this more mathematically below.

\subsection*{\underline{Lottery Mechanism:}} Let $\mathcal{U}$ be the set of all b-matchings in $G$. A matching lottery $l:$ $P_{\mu}, \mu \in M$ is a probability distribution over $\mathcal{U}$. For each matching $\mu \in \mathcal{U}$, $P_{\mu}, \mu \in [0,1]$ is the probability of choosing matching $\mu$ in lottery l, and $\sum_{\mu \in \mathcal{U}} P_{\mu} = 1$. A matching lottery $l$ can be also viewed as a fractional b-matching which

is defined to be a convex combination of several integral b-matchings. Let $\mathcal{L}$ be the set of matching lotteries. Given $l \in \mathcal{L}$, define the utility $x^{l}_{i}$ of vertex i to be the expected total exchanges that i is involved , i.e., $x^{l}_{i} = \sum_{\mu \in \mathcal{U}} l_{\mu}x_{i}^{\mu}$. Define the utility profile induced by lottery $l$ to be the vector $x_{l} = (x^{l}_{i}), i \in V$ . Let $P = {x_{l}}, l \in \mathcal{L}$ be the set of all feasible utility profiles.

\subsection*{\underline{Egalitarian Mechanism: Description and Properties}}

It is now clear from our definition that a feasible utility profile can be understood as a fractional b-matching (a convex combination of integral b-matchings). The $P_{rank}$ polymatroid we defined earlier is exactly the set of feasible utility profiles.

Roth et al.~\cite{roth2005pairwise} described the Egalitarian Mechanism for the pairwise kidney exchange problem. Jianli et al.~\cite{jianli_kidneylp} develop a water filling algorithm which gives a simpler and intuitive proof of the Roth's mechanism. Here, we use the ideas from jianli et al. to develop an Egalitarian Mechanism for the generalized maximum indivisible exchange problem. The idea is similar in the sense that the problem on a general network can be suitably reduced to a problem on a bipartite network. This can be done if one could establish the equivalence between the solutions in both network. Once the above step is completed, we could run the familiar allocation rules for bipartite networks to establish a allocation for our original general network. We describe this procedure below in more detail\\

\noindent \underline{\textbf{Step 1:}} \emph{Transformation to equivalent bipartite network:}
We construct the following two sided bipartite flow network. The left side of the bipartite network consists of nodes for each agent i. \\

Construction of Nodes: Define $\mathcal{C} = \{C_{1},C_{2}....C_{k}\}$ as the set of odd components in the underdemanded component $V^{U}$ where $k = |C|$. Construct the following bipartite graph $G_{B} = (A,B,E)$ where each node in $A$ corresponds to a node in $V$. We use $A^{C_{1}},...,A^{C_{k}}$ to denote $k$ disjoint sets corresponding to $C_{1},C_{2},..C_{k}$ respectively. Let $A^{PO} = \{a_{i} | i \in V^{P} \cup N^{O} \}$; $A^{U} = \{a_{i} | i \in V^{U}\}$. 

The construction of B is as follows. B Can be partitioned into 3 parts: $B^{PO} \cup B^{O} \cup B^{C}$. Each node in $B^{PO}$ corresponds to a node in $B^{P} \cup B^{O}$. We use $c_{i} \in B^{PO}$ to denote the node corresponding to $i \in B^{P} \cup B^{O}$. Each node in $B^{P}$ corresponds to a node $V^{P}$. We use $c_{i}' \in B^{O}$ to denote the node corresponding to $i \in V^{P}$. $B^{C}$ consists of $k$ disjoint sets $B^{C_{1}}, B^{C_{2}}....B^{C_{k}}$, where $B^{C_{i}}$ contains a node with value $\Sigma_{j \in C_{i}} b_{j} - 1 $ (this is the maximum b-matching within a particular odd component) if and only if $|B^{C_{i}}| \geq 2$; $B^{C_{i}}$ is empty otherwise.\\

Construction of Edges: \\
(1) For each $i \in V^{P} \cup V^{O}$, we have edge $(a_{i},c_{i}) \in E$ where $a_{i} \in A^{PO}$ and $c_{i} \in B^{PO}$ \\
(2) For a vertex $a_{i} \in A^{C}$ and another vertex $c'_{j} \in B^{O}$ we have $(a_{i},c'_{j}) \in E$ if $(i,j) \in E$ \\
(3) For each vertex, $a \in A^{C_{i}}$ add an edge to the node in  $B^{C_{i}}$.
%In other words, $A^{C_{i}}$ and $B^{C_{i}}$ form a complete bipartite graph.
\\ \\
\noindent \underline{\textbf{Step 2:}} \emph{Apply the egalitarian mechanism (BIMS)}:
Apply the Egalitarian Mechanism for indivisible goods in a bipartite network. \\

\noindent \underline{\textbf{Step 3:}} \emph{Construct a randomized mechanism: lottery over b-matchings:} The egalitarian mechanism outcomes a fractional utility profile for the agents. From theorem~\ref{thm:lott} later, it is clear that such a utility profile is actually feasible and can be generated as a lottery over integral maximum b-matchings. Every time, we generate a maximum b-matching with this probability profile and conduct exchanges on the network.
\qed
\\ \\
\noindent \textbf{Example:} Consider the original network as given in figure~\ref{ged_general}, the bipartite transformation outlined in step 1 is shown in 
figure~\ref{bip_ged}. When the egalitarian mechanism is applied to this bipartite network, we get the following utility profile for the agents: Every agent $i \in V^{P} \cup V^{O}$ receive their allocation = peak. The agents $s_{1},s_{2},s_{3}$ receive 2 units of utility each. 
The utility of agents $s_{2},s_{4},s_{5}$ is $\frac{7}{3}$ each. This fractional utility is obtained as a lottery over integral b-matchings. In this case, 2 units of agents $s_{4},s_{5}$ is always exchanged with agent $s_{6}$. $s_{2}$ has to exchange 2 units within its odd component in every maximum b-matching. Agents $s_{2},s_{4},s_{5}$ compete for the extra unit of exchange that $s_{6}$ can do. The egalitarian mechanism picks any of the possibility with a probability of $\frac{1}{3}$.
%--------------------------------

\begin{figure}
\begin{center}
\begin{tikzpicture}

\Vertex[x=0,y=12]{$a_9$}
\node at (-0.5,12.5){$2$};

\Vertex[x=0,y=11]{$a_{10}$}
\node at (-0.5,11.5){$2$};

\Vertex[x=0,y=10]{$a_{11}$}
\node at (-0.5,10.5){$2$};

\Vertex[x=0,y=9]{$a_{12}$}
\node at (-0.5,9.5){$2$};

\Vertex[x=6,y=12]{$c_9$}
\node at (6.5,12.5){$2$};

\Vertex[x=6,y=11]{$c_{10}$}
\node at (6.5,11.5){$2$};

\Vertex[x=6,y=10]{$c_{11}$}
\node at (6.5,10.5){$2$};

\Vertex[x=6,y=9]{$c_{12}$}
\node at (6.5,9.5){$2$};

\Vertex[x=0,y=7]{$a_{13}$}
\node at (-0.5,7.5){$2$};

\Vertex[x=0,y=6]{$a_{14}$}
\node at (-0.5,6.5){$2$};

\Vertex[x=0,y=5]{$a_{15}$}
\node at (-0.5,5.5){$2$};

\Vertex[x=6,y=7]{$c_{13}$}
\node at (6.5,7.5){$2$};
\Vertex[x=6,y=6]{$c_{14}$}
\node at (6.5,6.5){$2$};
\Vertex[x=6,y=5]{$c_{15}$}
\node at (6.5,5.5){$2$};

\Vertex[x=0,y=3]{$a_6$}
\node at (-0.5,3.5){$5$};
\Vertex[x=0,y=2]{$a_7$}
\node at (-0.5,2.5){$2$};

\Vertex[x=6,y=3]{$c_6$}
\node at (6.5,3.5){$5$};
\Vertex[x=6,y=2]{$c_7$}
\node at (6.5,2.5){$2$};

\Vertex[x=6,y=0]{$c'_6$}
\node at (6.5,.5){$5$};
\Vertex[x=6,y=-1]{$c'_7$}
\node at (6.5,-0.5){$2$};

\Vertex[x=-6,y=0]{$s$}
\Vertex[x= 12,y=0]{$t$}

\Vertex[x=0,y=-3]{$a_1$}
\node at (-0.5,-2.5){$2$};
\Vertex[x=0,y=-4]{$a_2$}
\node at (-0.5,-3.5){$3$};
\Vertex[x=0,y=-5]{$a_3$}
\node at (-0.5,-4.5){$2$};

\Vertex[x=6,y=-4]{$B^{C_{1}}$}
\node at (6.5,-3.5){$6$};

\Vertex[x=0,y=-7]{$a_4$}
\node at (-0.5,-6.5){$4$};

\Vertex[x=0,y=-9]{$a_5$}
\node at (-0.5,-8.5){$4$};

%\Vertex[x=0,y=-3]{$s_6$}
%
%\Vertex[x=0,y=-3]{$s_7$}

\Vertex[x=0,y=-11]{$a_8$}
\node at (-0.5,-10.5){$4$};

%\Edges($s_1$,$s_2$)
%\Edges($s_2$,$s_3$)
%\Edges($s_3$,$s_1$)
%\Edges($s_6$,$s_2$)
%\Edges($s_6$,$s_4$)
%\Edges($s_6$,$s_5$)
%\Edges($s_7$,$s_8$)

%\node at (-.5,0.5){$2$};
%\node at (1.5,0.5){$2$};
%\node at (0.5,-1.5){$3$};
%
%\node at (3.5,.5){$4$};
%\node at (4.5,.5){$4$};
%\node at (3.5,-3.5){$5$};
%\node at (5.5,-3.5){$2$};
%\node at (6.5,.5){$4$};

% perfectly demanded
%\Vertex[x=0,y=-5]{$s_{9}$}
%\Vertex[x=0,y=-7]{$s_{10}$}
%\Vertex[x=2,y=-7]{$s_{11}$}
%\Vertex[x=2,y=-5]{$s_{12}$}
\Edges($a_9$,$c_9$)
\Edges($a_{10}$,$c_{10}$)
\Edges($a_{11}$,$c_{11}$)
\Edges($a_{12}$,$c_{12}$)
\Edges($a_{13}$,$c_{13}$)
\Edges($a_{14}$,$c_{14}$)
\Edges($a_{15}$,$c_{15}$)

\Edges($a_6$,$c_6$)
\Edges($a_7$,$c_7$)

\Edges($a_2$,$c'_6$)
\Edges($a_4$,$c'_6$)
\Edges($a_5$,$c'_6$)

\Edges($a_8$,$c'_7$)

\Edges($a_1$,$B^{C_{1}}$)
\Edges($a_2$,$B^{C_{1}}$)
\Edges($a_3$,$B^{C_{1}}$)

%-------------------
%\node at (-0.5,-4.5){$2$};
%\node at (-0.5,-6.5){$2$};
%\node at (1.5,-6.5){$2$};
%\node at (1.5,-4.5){$2$};

%\Vertex[x=5,y=-5]{$s_{13}$}
%\Vertex[x=4,y=-7]{$s_{14}$}
%\Vertex[x=6,y=-7]{$s_{15}$}
%\Edges($s_{13}$,$s_{14}$)
%\Edges($s_{15}$,$s_{14}$)
%\Edges($s_{13}$,$s_{15}$)
%\node at (4.5,-4.5){$2$};
%\node at (3.5,-6.5){$2$};
%\node at (5.5,-6.5){$2$};
%
%
%\Edges($s_{13}$,$s_7$)
%\Edges($s_{12}$,$s_6$)

%-------------------source and sink edges
\Edges($s$,$a_1$)
\Edges($s$,$a_2$)
\Edges($s$,$a_3$)
\Edges($s$,$a_4$)
\Edges($s$,$a_5$)
\Edges($s$,$a_6$)
\Edges($s$,$a_7$)
\Edges($s$,$a_8$)
\Edges($s$,$a_9$)
\Edges($s$,$a_{10}$)
\Edges($s$,$a_{11}$)
\Edges($s$,$a_{12}$)
\Edges($s$,$a_{13}$)
\Edges($s$,$a_{14}$)
\Edges($s$,$a_{15}$)

\Edges($t$,$c_9$)
\Edges($t$,$c_{10}$)
\Edges($t$,$c_{11}$)
\Edges($t$,$c_{12}$)
\Edges($t$,$c_{13}$)
\Edges($t$,$c_{14}$)
\Edges($t$,$c_{15}$)
\Edges($t$,$c_6$)
\Edges($t$,$c_7$)
\Edges($t$,$c'_6$)
\Edges($t$,$c'_7$)
\Edges($t$,$B^{C_{1}}$)

%----------------

\end{tikzpicture}
\end{center}
\caption{Transformation into bipartite network of the original graph in figure~\ref{ged:general}; $a_{i}$ refers to node $s_{i}$}
\label{bip_ged}
\end{figure}

%---------------------------------
\newpage
\subsection*{\underline{Egalitarian Mechanism: Linear Programming Approach}}

1) Step 1. Solve the linear program $ LP_{1}$
\begin{eqnarray*}
 &  \text{Maximize} \hspace{2mm} \lambda_{1} \\
& \text{subject to} \\
& x_{v} = \lambda_{1}, \forall v \in V, x \in P_{rank} 
\end{eqnarray*}
We call an element $v \in V$ a tight element if $x_{v}$ participates in some tight constraint in $P_{rank}$. Let $D_{1}$ be the set of tight elements.  In other words, increasing $x_{v}$ would violate some constraint in $P_{rank}$ when other $x_{u}$ for $u \neq v$ are fixed. In many linear programming algorithms, we can easily detect such tight elements. Another way to test the tightness of an element is to solve a closely related linear program in which we fix other $x_{u}$ and maximize $x_{v}$.

(2) In general, at Step $k$, we solve the linear program 
\begin{eqnarray*}
& LP_{k} = \text{maximize} \hspace{2mm}  \lambda_{k}, \text{subject to} \hspace{2mm} x_{v} = \lambda_{j}, \hspace{2mm} \forall \hspace{2mm} v \in D_{j} \hspace{2mm} \forall \hspace{2mm} j < k, \\
& x_{v} =\lambda_{k}, \forall v \in V \backslash \cup_{j < k} D_{j},\hspace{2mm} x \in P_{rank}. 
\end{eqnarray*}
Let $D_{k}$ be the set of elements that become tight in this step. \\

(3) The algorithm return $x={x_{v} =\lambda_{j} \hspace{2mm} \text{for} \hspace{2mm} v \in D_{j}} \hspace{2mm} \text{if} \hspace{2mm} \cup_{j=1}^{k} D_{j} =V$ \qed

\begin{lemma}
 Every feasible maximum flow on the modified bipartite network is equivalent to a feaible utility profile in the original network
\end{lemma}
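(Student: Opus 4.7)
The plan is to establish a utility-preserving correspondence between maximum flows in the bipartite network $G_B$ and lotteries over maximum b-matchings in $G$. Since both sides are convex sets, it suffices to (i) map each integral maximum b-matching in $G$ to an integral maximum flow in $G_B$ with the same node-level utility, (ii) map each integral maximum flow in $G_B$ back to a maximum b-matching in $G$ with the same utility, and (iii) extend by convex combinations. The generalized Gallai--Edmonds lemma proved earlier is the structural engine: it dictates that every edge of a maximum b-matching falls into one of three classes -- matches between two vertices of $V^P \cup V^O$, cross-matches between $V^O$ and $V^U$, and matches entirely within an odd component of $V^U$ -- and these classes correspond one-to-one with the three families of edges built in Step 1 of the construction, namely $(a_i,c_i)$ for $i\in V^P\cup V^O$, $(a_i,c'_j)$ for $i\in V^U$ and $j\in V^O$, and the odd-component edges into each sink $B^{C_k}$.

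For the forward direction I would take any maximum b-matching $\mu$ and route each matched pair $(i,j)$ through whichever of the three channels the GED classification prescribes. By the structural lemma, every vertex of $V^P \cup V^O$ is saturated to its peak in $\mu$, and each odd component $C_k \subseteq V^U$ contributes exactly $\sum_{j\in C_k} b_j - 1$ total units (partly routed to $B^{C_k}$ for within-component matches, partly to some $c'_j$ for matches with $V^O$ neighbors); these totals equal precisely the capacities on the right-hand sink side of $G_B$, so the constructed flow is feasible and maximum. The induced utility of agent $i$ in $G$ equals the net out-flow of $a_i$ by construction, and linearity extends the map to lotteries.

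For the reverse direction, I would decompose an integer maximum flow $g$ into unit flow paths and read off matches: each unit along $(a_i,c_i)$ or $(a_i,c'_j)$ directly identifies a pair of original-graph vertices to be matched, and together these yield a valid partial b-matching on edges between $V^P \cup V^O$ and between $V^O$ and $V^U$. The main obstacle is translating the flow units that pass through the aggregated node $B^{C_k}$: the flow tells us only how much each vertex $j \in C_k$ must receive from inside $C_k$, not which edges realize this. To clear this hurdle I would apply the Gallai--Edmonds decomposition recursively to the subgraph $G[C_k]$ to argue that any integer degree vector $(y_j)_{j\in C_k}$ with $y_j \leq b_j$ and $\sum_j y_j = \sum_j b_j - 1$ that lies in the b-matching polymatroid of $C_k$ is realized by an honest integer b-matching of $C_k$; this is essentially the standard fact that the base polytope of a b-matching matroid is the convex hull of its integer bases. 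For general (non-integral) maximum flows, I would first write $g$ as a convex combination of integer max flows using the integrality of bipartite network flows with integer capacities, apply the above construction to each summand, and then combine the resulting matchings into a lottery. The utility identity $x^l_i = $ (out-flow of $a_i$) is preserved in both directions by construction, which completes the equivalence.
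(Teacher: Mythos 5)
Your proposal follows essentially the same route as the paper: both directions reduce to the integral case via the integrality of the bipartite flow polytope, and both use the generalized Gallai--Edmonds lemma to route matched units through the three edge classes of $G_B$ (the $(a_i,c_i)$ edges, the $(a_i,c'_j)$ cross-edges, and the aggregated sinks $B^{C_k}$) and back; you are in fact more explicit than the paper on the one delicate point, namely realizing the degree vector at $B^{C_k}$ as an honest integer b-matching of $G[C_k]$. One small bookkeeping correction: when an odd component $C_k$ has a cross-match to $V^O$, its total out-flow is $\sum_{j\in C_k} b_j$, with exactly $\sum_{j\in C_k} b_j - 1$ saturating $B^{C_k}$ and one additional unit going to a $c'_j$ node --- not $\sum_{j\in C_k} b_j - 1$ split between the two, which would leave $B^{C_k}$ unsaturated and the flow non-maximum.
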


\textbf{Proof:} A utility profile in the original network is induced by a maximum b-matching $\mu$ or a lottery (linear combination) over the set of maximum b-matchings. Also, it is well known that the polytope of maximum flows is convex with integral extreme points. Hence, to prove that every utility profile in original network implies a feasible maximum flow in the modified network, it is sufficient to prove the statement only for integral utility profiles (as every deterministic maximum b-matching corresponds to a integral utility profile). Now, given this maximum b-matching, we will match it to a unique maximum flow, $f$ in the modified network. Since each agent $i \in N^{P} \cup N^{O}$ is saturated, send flow $f_{ij} = b_{i}$ on all edges $ij, i \in A^{PO}, j \in B^{PO}$. Now, focus on each component $C_{k}$, if $C_{k}$ has only one node $i$ with utility $U_{i}$. Then this utility is derived only by exchanging with the agents in $B^{O'}$. Send a flow, $f_{ij} = x_{ij} \hspace{2mm} \forall ij$ such that $j \in N(i) \cap B^{O'} $ where $x_{ij}$ is the number of units exchanged between agents $i$ and $j$ in the given maximum b-matching. On a similar note, we fix the flow for the other components $C_{k}$. We know in any $C_{k}$, $|C_{k}|-1$ of the vertices are matched in the same component, so a flow of $|C_{k}|-1$ can be obtained by sending $x_{ij}$ units of flow to the agents in $B^{C_{k}}$.  If $ij \in \mu, i \in V^{C}$ and $j \in V^{O}$, then node $i$ is saturated in that particular b-matching; If we  send a unit flow in modified network, it is also saturated in the bipartite network.

To prove the other direction that every integral maximum flow corresponds to a feasible utility profile in the original network. Consider a maximum flow of $|F|$ in the modified bipartite network. First lets say the maximum flow value is $|B|$, this is also the size of the maximum b-matching in the original network. For a given set $A^{C_{i}}$ with $|A^{C_{i}}|>=2$, there is at most one vertex $a_{j} \in A^{C_{i}}$ such that there exists a vertex $c'_{h} \in B^{O}$ with $f(a_{j},c'_{h}) = 1$.In this case, we include in the matching $\mu$. 
In any maximum flow, for each $C_{i}$, agents in $A^{C_{i}}$ send $|C_{i}| - 1$ units of flow to $B^{C_{i}}$. Consequently from GED lemma, $|C_{i}|-1$ vertices can be matched among themselves. There are exactly $|V^{O}|$ units exchanged by agents in $A^{C}$ with agents in $V^{O}$. each such exchnge constitutes one unit flow to $B^{O}$. So all vertices in $V^{O}$ are matched. From GED Lemma, we can match all vertices in $V^{P}$ among themselves in $\mu$. It is easy to see that $u$ is exactly the utility profile corresponding to $\mu$.

\begin{theorem}
\label{thm:lott}
 Once we have this, then we can find the lottery mechanism by solving an LP - linear combination of extreme points
\end{theorem}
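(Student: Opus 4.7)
The plan is to leverage the bijection established in the preceding lemma and the integrality of the relevant polytope, so that the fractional utility profile produced by the egalitarian mechanism is automatically a convex combination of utility profiles induced by integral maximum b-matchings; the lottery weights can then be recovered by solving a feasibility LP.

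First I would observe that the egalitarian mechanism, applied to the modified bipartite network $G_B$, returns a (possibly fractional) maximum flow $f^\ast$ whose per-node throughput is the utility profile $u^\ast$. Because $G_B$ is bipartite with integer capacities, the maximum-flow polytope is integral, so $f^\ast$ lies in the convex hull of integral maximum flows $f^{(1)},\ldots,f^{(K)}$ of the same value. By Carath\'eodory's theorem in dimension $|E(G_B)|$, at most $|E(G_B)|+1$ of these integral flows are needed, and there exist nonnegative weights $\lambda_1,\ldots,\lambda_K$ with $\sum_k \lambda_k = 1$ such that $f^\ast = \sum_k \lambda_k f^{(k)}$.

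Next, I would invoke the previous lemma: each integral maximum flow $f^{(k)}$ in $G_B$ corresponds to an integral maximum b-matching $\mu^{(k)}$ in the original non-bipartite network $G$, and the utility profile induced by $f^{(k)}$ (taken at the $A$-side nodes, or equivalently averaged with the $B$-side nodes as in Section~\ref{s:model1}) matches the utility profile induced by $\mu^{(k)}$. Linearity in $\lambda$ then gives $u^\ast = \sum_k \lambda_k x(\mu^{(k)})$, so the egalitarian utility profile is realized as the lottery $\ell$ that picks $\mu^{(k)}$ with probability $\lambda_k$. This is exactly the desired lottery over integral maximum b-matchings.

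Finally, to make the construction algorithmic, I would set up the feasibility LP whose variables are $\lambda_\mu$ for $\mu$ ranging over the set $M$ of maximum b-matchings of $G$, with constraints
\begin{equation*}
\sum_{\mu \in M} \lambda_\mu\, x_i(\mu) = u^\ast_i \quad \forall i \in V, \qquad \sum_{\mu \in M} \lambda_\mu = 1, \qquad \lambda_\mu \ge 0.
\end{equation*}
The argument above shows this LP is feasible, and any basic feasible solution gives a lottery supported on at most $|V|+1$ matchings. The main obstacle is that $|M|$ can be exponential, so I would not solve this LP by full enumeration; instead I would either (i) work directly with the integral decomposition of $f^\ast$ in $G_B$, obtained by a standard flow-decomposition/cycle-cancellation routine that peels off integral flows one at a time, and then translate each component back to a matching $\mu^{(k)}$ via the bijection, or (ii) solve the LP by column generation, where the separation oracle is a maximum weight b-matching computation (polynomial time by Cook--Cunningham). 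Either route yields the lottery in polynomial time, completing the construction.
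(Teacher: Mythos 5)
Your proposal is correct and follows essentially the same route as the paper: decompose the fractional egalitarian maximum flow on the bipartite network into a convex combination of integral maximum flows (justified by integrality of the flow polytope), pull each integral flow back to a maximum b-matching via the preceding equivalence lemma, and read off the convex weights as the lottery probabilities. The paper states the decomposition step as ``folklore'' without further detail, so your additions (the Carath\'eodory bound on the support, the explicit feasibility LP, and the flow-decomposition or column-generation routes to compute the weights) are a welcome fleshing-out rather than a departure.
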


\textbf{Proof:} From the lemma above, we have showed that every maximum flow in the modified network is equivalent to a maximum b-matching induced utility profile in the original network. The egalitarian allocation outputs a maximum flow allocation (not necessarily integral). But it is folklore that this non-integral maximum flow can be obtained as a convex combination of integral maximum flows. These convex combinations $\lambda$ are the probabilities with which we pick different matchings from the set of all maximum b-matchings. 

%\textbf{Is there a combinatorial way of obtaining the convex combinations?}

%\newpage

\subsection*{\underline{Lorenz Dominance $\&$ No Envy}}

As discussed in Klaus~\cite{ehlers2003probabilistic}, No envy and ETE are not equivalent here. From theorem above, the egalitarian mechanism applied to this bipartite network is a feasible utility profile. Bochet et al.~\cite{bims} have established the egalitarian mechanism is lorenz dominant among all feasible flows. No Envy  also follows from the allocation rule of Bochet et al.~\cite{bim,bims}

\subsection*{\underline{Extensions and Consistency}} 
\textbf{Extension of Rules:} In the language of Moulin and Sethuraman~\cite{ms}, an allocation rule $\phi$ on a bipartite network $(G,V,E)$ is said to be an extension of the Sprumont's Uniform rule if $\phi$ coincides with the allocation of uniform rule if $G$ is a network with unit demander (supplier) connected to multiple suppliers (demanders). i.e.
\begin{equation}
\phi_{i} = U_{i}, \hspace{2mm} \forall \hspace{2mm} i \in V
\end{equation}
where $U_{i}$ is the utility of agent $i$ under uniform rule.

The notion of "extending a rule" is such that we are not compromising on properties of existing fair allocation mechanisms. Instead, we are actually generalizing in a suitable way to more general network structures.

In this spirit, Bochet et al.~\cite{bim,bims}, Chandramouli and Sethuraman~\cite{cs,cs2} develop mechanisms which are extensions of the uniform rule. Moulin and Sethuraman~\cite{ms} study a more general class of extensions of some basic well known rules. We extend this definiton further to general non-bipartite networks. An allocation rule $x$ on a general network $(G,V,E)$ is said to be an extension of the BIMS Egalitarian rule $\phi$ if $x$ coincides with $\phi$ if the network $G$ is bipartite i.e. $x_{i} = \phi_{i}, \hspace{2mm} \forall \hspace{2mm} i \in V$

\begin{lemma}
The egalitarian rule described earlier for non-bipartite networks is a extension of the  probabilitic egalitarian rule (described in appendix) for bipartite networks
\end{lemma}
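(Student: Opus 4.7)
The plan is to show that when the input graph $G$ is bipartite, the bipartite transformation constructed in Step~1 of the mechanism produces an auxiliary network $G_B$ whose feasible flow polytope coincides (coordinate-wise, in the induced utility profiles) with the bipartite network on which the BIMS probabilistic egalitarian rule operates directly on $G$. Once this correspondence is established, Step~2 applies BIMS to the same underlying LP, so the computed utility profile agrees, and by Theorem~\ref{thm:lott} the lottery in Step~3 decomposes this profile into the same distribution over integral b-matchings.

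The critical structural step is to verify that when $G$ is bipartite, every odd component $C_k$ in the underdemanded set $V^{U}$ is a singleton. I would proceed by invoking the construction used in the proof of the GED lemma: duplicate each vertex $i$ into $b_i$ unit-capacity copies to obtain $G'$. Bipartiteness of $G$ lifts to $G'$, since duplicates inherit the side of their parent and duplicate–duplicate edges are excluded. By the classical Gallai--Edmonds theorem for unit matchings, the factor-critical (odd) components of the exposed set in a bipartite graph must be single vertices, as any factor-critical component of size $\geq 3$ requires an odd cycle. Collapsing duplicates back then yields $|C_k|=1$ for every odd component of the b-matching decomposition of $G$.

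With $|C_k|=1$ for every $k$, the construction explicitly sets $B^{C_k}=\emptyset$, removing all Type-(3) edges. What remains in $G_B$ are the Type-(1) edges $(a_i,c_i)$ for $i\in V^{P}\cup V^{O}$ and the Type-(2) edges $(a_i,c'_j)$ from each singleton $V^{U}$ vertex to the mirrors of its $V^{O}$ neighbors. I would then check that, after restricting to the utility-relevant coordinates, this reduced $G_B$ is flow-equivalent to the source-sink augmentation of $G$ viewed as a bipartite BIMS network: Type-(2) flows are in bijection with $b$-matching flows on the edges of $G$, while Type-(1) edges for saturated nodes only absorb slack capacity and do not change $x_i = \sum_{j\in N(i)} f_{ij}$.

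The last step is to invoke the earlier lemma identifying feasible maximum flows in $G_B$ with feasible utility profiles of $G$, together with Theorem~\ref{thm:lott}. Since the BIMS egalitarian rule is defined as the lexicographically dominant point of this polytope, and since both $G_B$ and the direct bipartite BIMS network for $G$ present the same polytope (after projection to the $x$-coordinates), the mechanism's output equals the BIMS probabilistic egalitarian rule on $G$. The main obstacle is pinning down the structural claim that bipartiteness forces every odd component to be a singleton, and carefully confirming that the slack Type-(1) edges do not alter the projection to the utility coordinates; once these are settled, the rest is a routine verification of the flow-polytope correspondence.
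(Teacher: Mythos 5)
Your proposal follows essentially the same route as the paper: both arguments hinge on showing that bipartiteness forces every odd component of $V^{U}$ to be a singleton (the paper asserts this as ``$V^{U}$ forms an independent set'' and cites the bipartite GED derivation, while you derive it via the unit-capacity duplication and the fact that factor-critical components of size $\geq 3$ contain odd cycles), after which both conclude that Step~1 reproduces a network flow-equivalent to $G$ and Step~2 applies the BIMS rule directly. Your version is in fact more careful than the paper's, which simply states that the transformation ``outputs $G$'' without addressing the residual Type-(1) edges.
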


\textbf{Proof:} If the network is bipartite, the odd components set, $V^{U}$ forms an independent set. So, $V^{U}$ is a collection of nodes (with peaks $b_{i}$) which are not saturated in atleast one maximum b-matching. Hence, the set $B^{U}$ is empty and each agent $i \in V^{U}$ is only connected to agents in $V^{O}$ who are completely saturated. In the bipartite context, if $i$ were a supplier then, any agent $j \in N(i)$ has $x_{j} = b_{j}$ in all pareto allocations. For a detailed derivation of GED for bipartite networks, follow the discussion in Chandramouli and Sethuraman~\cite{cs}. It is clear from their proof that if the given network $G$ is bipartite, then the Step 1 of our algorithm which modifies the given network into a suitable bipartite network, outputs $G$. Since, we apply the probabilistic egalitarian rule in step 2, the result follows. \qed
\\ \\
\textbf{Consistency:}
%{
%\textbf{Define consistency in the context of indivisible goods} \\
%}
As discussed in Chandramouli and Sethuraman~\cite{cs2}, the egalitarian rule is an extension of the uniform rule which is not consistent. They propose the edge fair rule which is a consistent extension of the uniform rule to bipartite networks. In the previous section, if after the transformation into bipartite network, if we apply the edge fair mechanism to that network, we would have a consistent extension of the uniform rule to non-bipartite indivisible goods network.

\section*{\underline{Strategic Issues}}

\subsection*{Peak Strategyproofness}
In the discussions so far, we have ignored the possibility of agents having control over the values $b_{i}$. Suppose agents report $b_{i}$ to the mechanism designer who then decides the final allocation, then the agents can misreport the peaks to improve their allocation.
So far, the allocation of the agents was strictly less than $b_{i}$. If the agents report $b_{i}' > b_{i}$, then there is a possibility of agents having an allocation more than his true $b_{i}$. In that case, we need assumptions on the preference profile of agents to compare his utilities when his allocations are on either side of his true peak $b_{i}$.

Single peaked preferences: In the spirit of Bochet et al.~\cite{bim,bims}, we would like to continue our assumption of single peaked preferences for the agent allocations. Mathematically, given a prefernce profile $R_{i}$ for agent $i$ and two possible allocations $x_{i},x_{i}'$ then:
\begin{eqnarray}
& x_{i}' < x_{i} \leq p[R_{i}] \implies x_{i}P_{i}x_{i}' \\
& p[R_{i}] \leq x_{i} < x_{i}' \implies x_{i}P_{i}x_{i}'
\end{eqnarray}

A mechanism is \emph{peak strategyproof} if it is a dominant strategy for the agents to reveal their peaks truthfully.
Given this preference structure, the following example shows that there is no peak strategyproof mechanisms in the set of pareto optimal allocations.  Consider agents $(a,b,c)$ connected to each other and peak $b_{i} = 1, i \in (a,b,c)$. If the agents report their true peaks, then any allocation mechanism is such that $\{(x_{a},x_{b},x_{c})|x_{a} + x_{b} + x_{c} \leq 2\}$. W.l.o.g, lets say $x_{a} < 1$. Suppose agent $a$ misreports his peak $b_{a} = 2$, then we have a unique maximum b-matching and the allocation is $(2,1,1)$. If $2P_{a}x_{a}$ for agent $a$, he improves his allocation.

\subsection*{\underline{Link Strategyproofness} }

A mechanism is \emph{link strategyproof} if it is dominant strategy for the agents to reveal all his/her neighbors. Roth et al.~\cite{roth2005pairwise} established the link strategyproofness of the egalitarian mechanism in the kidney exchange problem. The egalitarian mechanism is not link group strategyproof even in the kidney exchange problem. To see that, consider the following network where each node has 1 unit of good to exchange. Agent $s_{4}$ is matched in every maximum matching, but $s_{7}$ is missed in some maximum matching and recieves a utility strictly less than 1 in the egalitarian allocation. Now, $s_{4},s_{7}$ can coordinate and misreport about the existence of the link between $s_{4}$ and $s_{3}$. If that link is not reported, then, $(s_{4},s_{5},s_{6},s_{7})$ form a separate group and all the agents are matched and receive peak utility. Hence, agent $s_{7}$ improves his allocation. 
\begin{figure}[h!]
\begin{center}
\begin{tikzpicture}

\Vertex[x=-6,y=0]{$s_1$}
\Vertex[x=-4,y=0]{$s_2$}
\Vertex[x=-2,y=0]{$s_3$}
\Vertex[x=0,y=0]{$s_4$}
\Vertex[x=2,y=0]{$s_5$}
\Vertex[x=4,y=0]{$s_6$}
\Vertex[x=6,y=0]{$s_7$}

\Edges($s_1$,$s_2$)
\Edges($s_2$,$s_3$)
\Edges($s_3$,$s_4$)
\Edges($s_4$,$s_5$)
\Edges($s_5$,$s_6$)
\Edges($s_6$,$s_7$)

\end{tikzpicture}
\end{center}
\end{figure}

\noindent \textbf{Weak link groupstrategyproof:} A mechanism is weakly link group strategyproof if in a coalitional group, every agent who misreports about his connectivity strictly improves his/her allocation.

The agents in over demanded and perfectly demanded component doesn't misreport since they already recieve their peak allocation. The only deviating subsets are those agents in the odd components. Hence, for the rest of the proof we restrict our attention to coalition groups which consists only of agents of the odd components.

\begin{theorem}
The egalitarian mechanism is weakly link groupstrategyproof
\end{theorem}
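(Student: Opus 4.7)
The plan is to reduce to the (weak) link group strategyproofness of the BIMS egalitarian mechanism on bipartite networks, and to control the combinatorial effects of the misreport on the Gallai--Edmonds partition through a Lorenz-domination argument in the true graph.

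First, as observed immediately before the statement, every agent in $V^{O}$ and $V^{P}$ receives its peak in every maximum $b$-matching, and so cannot strictly gain; it therefore suffices to consider coalitions $K$ whose members all sit in odd components of $V^{U}$. I would proceed by contradiction: assume all agents in $K$ strictly gain when they jointly hide some subset of their true links, reporting a subgraph $G'$ of $G$. Let $x$ and $x'$ be the egalitarian allocations on $G$ and $G'$, so that $x'_{i} > x_{i}$ for every $i \in K$.

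I would then split into two cases according to whether the Gallai--Edmonds partition is preserved by the misreport. If it is, then the bipartite auxiliary graph of Step~1 built from $G'$ differs from the one built from $G$ only by the deletion of the coalition-incident edges in $A^{U}$, so the coalition's misreport in $G$ is a pure link misreport in the bipartite graph $G_B$. The weak link group strategyproofness of the BIMS egalitarian rule on bipartite networks (\cite{cs,cs2}) then directly contradicts the simultaneous strict improvement of every $i \in K$.

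The main obstacle is the remaining case, in which the Gallai--Edmonds partition itself changes: odd components can merge or split, and vertices can move between $V^{U}$, $V^{O}$, and $V^{P}$. To handle this I would argue as follows. Let $\mu'$ be a maximum $b$-matching of $G'$ supporting $x'$; since $G'$ is a subgraph of $G$, $\mu'$ is feasible in $G$ as well. Combining $\mu'$ with a true-graph maximum $b$-matching $\mu$ supporting $x$ and decomposing the symmetric difference $\mu \triangle \mu'$ into alternating paths and cycles restricted to $K$ and its over-demanded neighbors, I would build a lottery over maximum $b$-matchings of $G$ whose utility profile on $K$ Lorenz-dominates $x|_{K}$, contradicting the Lorenz-dominance of the egalitarian allocation established earlier in this section. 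The delicate step is verifying that the augmentations inside each affected odd component respect both the parity structure of Gallai--Edmonds and the water-filling structure of $P_{rank}$; I expect this to reduce, through the bipartite transformation of Step~1, to a bipartite augmenting-path argument analogous to the one used for link group strategyproofness on bipartite networks in \cite{cs,cs2}.
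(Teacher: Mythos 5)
Your reduction of the problem to coalitions inside odd components, and your instinct to push everything through the Step~1 bipartite transformation, both match the paper. But the heart of your argument --- Case~2, where the Gallai--Edmonds structure changes under the misreport --- has a genuine gap. You propose to contradict the egalitarian allocation's Lorenz dominance by exhibiting a lottery whose utility profile \emph{on $K$} Lorenz-dominates $x|_{K}$. Lorenz dominance is a property of the ordered partial sums of the \emph{entire} vector $x\in\mathbb{R}^{V}$; a feasible profile that is better on a sub-coalition (while possibly worse elsewhere) contradicts nothing. There is no consistency-type result that would let you localize Lorenz dominance to $K$ --- indeed the paper explicitly notes that the egalitarian rule is \emph{not} consistent. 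Moreover, the lottery you would need is never actually constructed: $\mu'$ is feasible in $G$ but need not be a maximum $b$-matching of $G$, and the symmetric-difference decomposition for $b$-matchings (a multigraph trail decomposition, not simple alternating paths) is not shown to produce maximum $b$-matchings of $G$ realizing the claimed profile. Since Case~1 (GED preserved) is the easy case and all the difficulty is deferred to Case~2, the proposal does not close the argument.

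The paper avoids the case split entirely. It works in the transformed bipartite network and inducts on the type-2 breakpoints of the water-filling algorithm, with bottleneck sets $X_{1},\dots,X_{n}$. Writing $\tilde{X}_{\ell}$ for the weakly improving members and $\hat{X}_{\ell}$ for the weakly losing ones, it shows by induction that (a) later-bottleneck deviators send no flow to the demanders of $\cup_{i'\in X_{\ell}}A_{i'}$, and (b) $X_{\ell}\subseteq\hat{X}_{\ell}$. The engine is a chain of inequalities comparing a weakly improving deviator $i\in\tilde{X}_{\ell}$ with a non-deviator $k\in X_{\ell}\setminus\tilde{X}_{\ell}$ (who must report truthfully, else he is a strictly losing coalition member), forcing $k$ into an earlier bottleneck under the misreport and ultimately forcing $\sum_{j}\phi'_{ij}=\sum_{j}\phi_{ij}$. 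If you want to salvage your approach, you would need either a genuine consistency/localization lemma for the egalitarian allocation on the sets $X_{\ell}$ (which is exactly what the bottleneck induction supplies) or to abandon the Lorenz route and argue directly about flows as the paper does.
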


{\bf Proof.}  We prove the result for an arbitrary coalition of agents. Let $A_i$ be the set of
agents
that agent $i$ is linked to, and let $A'_i$ be agent $i$'s report. We may
assume without loss of generality that any given agent $\in V^{PO}$ finds all the
agents in odd components
acceptable: if even component agent $j$ finds add component agent $i$ unacceptable, then agent $i$
cannot have a link to demander $j$ regardless of his report, so clearly $i$'s
manipulation opportunities are more restricted. Let $\phi$ and $\phi^{'}$ be
(any) egalitarian
flows when the suppliers report $A$ and $A'$ respectively, and let $x$ and $x'$
be
the corresponding allocation to the agents. We show that no coalition of
odd component agents
can weakly benefit by misreporting their links unless each agent in the
coalition
gets exactly their egalitarian allocation. 

Consider the transformed bipartite network of the given network when going through the analysis below:

The proof is by induction on the number of type 2 breakpoints in the algorithm
to compute
the egalitarian allocation in the transformed network. Suppose the given instance has $n$ type 2
breakpoints, and
suppose $X_1, X_2, \ldots, X_n$ are the corresponding bottleneck sets of
suppliers.
If $n = 0$, every odd component agent is at his peak value in the egalitarian allocation,
and
clearly this allocation cannot be improved. Suppose $n \geq 1$.
Define $$\tilde{X}_{\ell} = \{ i \in X_{\ell} \mid \sum_{j \in A_i} \phi'_{ij}
\; \geq \;
				\sum_{j \in A_i} \phi_{ij} \},$$
and
$$\hat{X}_{\ell} = \{ i \in X_{\ell} \mid \sum_{j \in A_i} \phi'_{ij} \; \leq \;
				\sum_{j \in A_i} \phi_{ij} \}.$$
We shall show, by induction on $\ell$, that for each $\ell = 1, 2, \ldots, n$:
\begin{itemize}
\item[(a)] $\phi^{'}_{ij} = 0$ for any $i \in \tilde{X}_{\ell'}$, $j \in
\cup_{i' \in X_{\ell}} A_{i'}$, $\ell' > \ell$; and
\item[(b)] $X_{\ell} \subseteq \hat{X}_{\ell}$.
\end{itemize}
The theorem follows from part~(b) above.

Any supplier $k \in X_{\ell} \setminus \tilde{X}_{\ell}$ must have
$A_k = A'_k$ as otherwise supplier $k$ is part of the deviating coalition and
does
worse. 
%\textbf{Using the \emph{pareto optimal} property of the egalitarian
%solution, we can limit the discussion only to an agent $i$, with allocation
%$x_{i} < s_{i}$, since, every agent $j$, with allocation $x_{j} = s_{j}, \in
%\hat{X}_{\ell}$}. 
Consider now a supplier $i \in \tilde{X}_{\ell}$ with $x_i <
s_i$ and a 
supplier $k \in X_{\ell} \setminus \tilde{X}_{\ell}$. We have the following
chain of
inequalities:
$$\sum_{j \in A'_k} \phi^{'}_{kj} \; = \; \sum_{j \in A_k} \phi^{'}_{kj}
				\; < \; \sum_{j \in A_k} \phi_{kj} = x_k \; \leq
\; x_i 
				\; = \; \sum_{j \in A_i} \phi_{ij}
				\; \leq \; \sum_{j \in A_i} \phi^{'}_{ij}
				\; \leq \; \sum_{j \in A'_i} \phi^{'}_{ij}.$$
To see why, note that as $k \in X_{\ell} \setminus \tilde{X}_{\ell}$, the second
inequality is true by definition, and also $A_k = A'_k$ (justifying the first
equality). Also $k, i \in X_{\ell}$ and $x_i < s_i$, implies $x_k < s_i$, as
suppliers $k$
and $i$
both belong to the same bottleneck set and supplier $i$ is below his peak; this
justifies
the third inequality. The fourth and fifth inequalities follow from the fact
that
$i \in \tilde{X}_{\ell}$ and the fact that $\phi^{'}_{ij}$ must be zero for all
$j \in A_i \setminus A'_i$.
This chain of inequalities implies that $x'_k < x_k \leq s_k$ and $x'_k <
x'_i$. 
Therefore, when the suppliers report $A'$, supplier $k$ must be a member of 
an ``earlier'' bottleneck set than supplier $i$. An immediate consequence is
that
demanders in $A'_k = A_k$ do not receive any flow from supplier $i$ when the
report is $A'$. 

%\textbf{In the initial induction step $\ell = 1$, supplier $i$, does not send
%any flow to the demanders in the earlier bottleneck set. Therefore 
%$$\{ j \mid \phi^{'}_{1j} > 0, j \in A_1\} \subseteq \{ j \mid \phi_{1j} > 0, j
%\in A_1 \}.$$
%This observation, along with the fact that every $i \in \tilde{X}_{1}$ weakly
%improves,
%and the fact that $X_{1}$ is a type 2 breakpoint implies that 
%$\sum_{j \in A_1} \phi^{'}_{1j} = \sum_{j \in A_1} \phi_{1j}$,
%establishing~(b). \textbf{Furthermore, in such a solution,
%every demander $j \in A_1$ for $i \in \tilde{X}_{1}$ must receive his peak
%allocation (since, by the definition of agent $i$, $(x_{i} < s_{i})$, $i \in
%M_{-}$ and agents in $M_{-}$ are only connected to demanders in $Q_{+}$
%who receive their peak allocation in every \emph{pareto optimal} solution).
%Such an agent $j$ must receive all his
%flow from the
%suppliers in $\tilde{X}_{1}$.} In particular, the demanders in $X_{1}$
%cannot receive
%any flow from suppliers in $X_{\ell'}$ for $\ell' > 1$, \textbf{(as that
%would violate the peak allocation property)} establishing~(a).
%}
By the induction hypothesis, supplier $i \in X_{\ell}$ does not send any
flow to the demanders in \textbf{$\cup_{1 \leq i' \leq \ell-1} \cup_{k \in
X_{i^{'}} } A_{k'}$.}
Therefore 
$$\{ j \mid \phi^{'}_{ij} > 0, j \in A_i\} \subseteq \{ j \mid \phi_{ij} > 0, j
\in A_i \}.$$
This observation, along with the fact that every $i \in \tilde{X}_{\ell}$ weakly
improves,
and the fact that $X_{\ell}$ is a type 2 breakpoint implies that 
$\sum_{j \in A_i} \phi^{'}_{ij} = \sum_{j \in A_i} \phi_{ij}$,
establishing~(b). Furthermore, in such a solution,
every demander $j \in A_i$ for $i \in \tilde{X}_{\ell}$ must receive all his
flow from the suppliers in $\tilde{X}_{\ell}$.In particular, the demanders in
$X_{\ell}$ 
cannot receive
any flow from suppliers in $X_{\ell'}$ for $\ell' > \ell$, establishing~(a).
To complete the proof we need to establish the basis for the induction proof,
i.e., the case of $\ell = 1$. This, however, follows easily: it is easy to
verify that the set $X_1 \setminus \tilde{X}_1$ must be empty, so 
$X_1 = \tilde{X}_1$. As $X_1$ is a type 2 bottleneck set, it is not possible
for {\em every} member of $X_1$ to do weakly better unless the allocation
remains unchanged. Thus, both (a) and (b) follow.
\qed

%\subsection*{Priority mechanism}
% can flush out details here - i have worked out the results ; but it adds no new insights to %this problem

%\textbf{I can also flush out the convex rules of Karol adapted to the setting here} 

%\subsubsection{Unit capacity on pairwise exchange/weighted b-matchings}
%only the base of the polymatroid changes in the LP \\

%what about arbitrary capacity?
\subsection{Arbitrary Capacity}

%\subsubsection{cycles, 3 way exchanges, lists, ordinal preferences?}

%\subsection{all nodes has 2 agents}

%\subsection*{Characterization}
%\subsection*{Multi way Exchanges - absence of lorenz and approximate majorization}
%\subsection*{Dynamic Models}
%
%\section{Model 2: with strict preferences}
%\subsection{Indivisible Goods}
%\subsubsection*{TTC Mechanism: Core and other properties}
%\subsection{Divisible Goods}
%\subsubsection*{PS mechanism and other properties}

%\section{Model 3: s-t networks}
%Karol stuff

%\section{Generalizing school choice and contracts model - basically to ordinal preferences}

\section{Conclusion}

\newpage

\section{Appendix}

\subsection{Indivisible exchange on bipartite networks: Probabilistic Egalitarian Rule}
We fix a problem $(G,s,d)$ such that $s_{i},d_{j}>0$ for all $i,j$ (clearly
if $s_{i}=0$ or $d_{j}=0$ we can ignore supplier $i$ or demander $j$
altogether). We define independently our solution for the suppliers and for
the demanders.

The definition for suppliers is by induction on the number of agents $%
|S|+|D| $. Consider the parameterized capacity graph $\Gamma (\lambda
),\lambda \geq 0 $: the only difference between this graph and $\Gamma
(G,s,d)$ is that the capacity of the edge $\sigma i,i\in S_{-}$ is $\min
\{\lambda ,s_{i}\}$, which we denote $\lambda \wedge s_{i}$. (In particular,
the edge from $j$ to $\tau $ still has capacity $d_{j}$). We set $\alpha
(\lambda ) $ to be the maximal flow in $\Gamma (\lambda )$. Clearly $\alpha $
is a piecewise linear, weakly increasing, strictly increasing at $0$, and
concave function of $\lambda $, reaching its maximum when the total $\sigma $%
-$\tau $ flow is $d_{D_{+}}$. Moreover, each breakpoint is one of the $s_{i}$
(type 1), and/or is associated with a subset of suppliers $X$ such that%
\begin{equation}
\sum_{i\in X}\lambda \wedge s_{i}=\sum_{j\in f(X)}d_{j}  \label{2}
\end{equation}%
Then we say it is of type 2. In the former case the associated supplier
reaches his peak and so cannot send any more flow. In the latter case the
group of suppliers in $X$ is a \textit{bottleneck}, in the sense that they
are sending enough flow to satisfy the collective demand of the demanders in 
$f(X)$ and these are the only demanders they are connected to; any further
increase in flow from any supplier in $X$ would cause some demander in $f(X)$
to accept more than his peak demand.

If the given problem does not have any type-2 breakpoint, then the
egalitarian solution obtains by setting each supplier's allocation to his
peak value. Otherwise, let $\lambda ^{\ast }$ be the first type-2 breakpoint
of the max-flow function; by the max-flow min-cut theorem, for every subset $%
X$ satisfying (\ref{2}) at $\lambda ^{\ast }$ the cut $C^{1}=\{\sigma \}\cup
X\cup f(X)$ is a minimal cut in $\Gamma (\lambda ^{\ast })$ providing a
certificate of optimality for the maximum-flow in $\Gamma (\lambda ^{\ast })$%
. If there are several such cuts, we pick the one with the largest $X^{\ast
} $ (its existence is guaranteed by the usual supermodularity argument). The
egalitarian solution obtains by setting 
\begin{equation*}
x_{i}=\min \{\lambda ^{\ast },s_{i}\},\;\text{for}\;i\in X^{\ast
},\;y_{j}=d_{j},\;\text{for}\;j\in f(X^{\ast }),
\end{equation*}%
and assigning to other agents their egalitarian share in the reduced problem 
$(G(S\diagdown X^{\ast },D\diagdown f(X^{\ast })),s,d)$. That is, we
construct $\Gamma ^{S\diagdown X^{\ast },D\diagdown f(X^{\ast })}(\lambda )$
for $\lambda \geq 0$ by changing in $\Gamma (G(S\diagdown X^{\ast
},D\diagdown f(X^{\ast })),s,d)$ the capacity of the edge $\sigma i$ to $%
\lambda \wedge s_{i}$, and look for the first type-2 breakpoint $\lambda
^{\ast \ast }$ of the corresponding max-flow function. An important fact is
that $\lambda ^{\ast \ast }>\lambda ^{\ast }$. Indeed there exists a subset $%
X^{\ast \ast }$ of $S\diagdown X^{\ast }$ such that%
\begin{equation*}
\sum_{i\in X^{\ast \ast }}\lambda ^{\ast \ast }\wedge s_{i}=\sum_{j\in
f(X^{\ast \ast })\diagdown f(X^{\ast })}d_{j}
\end{equation*}%
If $\lambda ^{\ast \ast }\leq \lambda ^{\ast }$\ we can combine this with
equation (\ref{2}) at $X^{\ast }$ as follows%
\begin{equation*}
\sum_{i\in X^{\ast }\cup X^{\ast \ast }}\lambda ^{\ast }\wedge s_{i}\geq
\sum_{i\in X^{\ast }}\lambda ^{\ast }\wedge s_{i}+\sum_{i\in X^{\ast \ast
}}\lambda ^{\ast \ast }\wedge s_{i}=\sum_{j\in f(X^{\ast }\cup X^{\ast \ast
})}d_{j}
\end{equation*}%
contradicting our choice of $X^{\ast }$ as the largest subset of $S_{-}$
satisfying (\ref{2}) at $\lambda ^{\ast }$.

Once the $\lambda$ is obtained the probabilistic egalitarian rule is obtained similarly 
to the probabilistic uniform rule of Klaus et al. and Moulin []. The trick here is to randomize
at each bottleneck. Now, without loss of generality, let $\bar{X} = \{i \in X^{**}| p(R_{i}) \geq x_{\lambda} + 1 \} = 
 \{1,2....,\bar{n}\}$ and $\tilde{N} = \{i \in N | p(R_{i}) \leq x_{\lambda}\} = \{\bar{n}+1,...,n\}$. Then in the final allocation we obtain, each agent in $\tilde{N}$ receives his peak amount and each agent in $\bar{N}$ 
 receives either $x_{\lambda}$ or $x_{\lambda} + 1$. Note that for each $i \in \bar{N}, (x_{\lambda}+1)P_{i}x_{\lambda}$ and that exactly $\bar{n}(\lambda - x_{\lambda})$ agents in $\bar{N}$ can receive $x_{\lambda} + 1$. The randomized "lottery" places equal probability on all allocations where all agents in $\tilde{N}$ 
 receive their peak amounts, $\bar{n}(\lambda - x_{\lambda})$ agents in $\bar{N}$ receive $x_{\lambda} + 1$
 and the remaining agents in $\bar{N}$ receive $x_{\lambda}$. Hence, the utility profile is obtained by placing
 equal probabilities on exactly $()$ allocations. 
 
 If $p(R_{i} \leq x_{\lambda}$, then $U_{i}(R)(p(R_{i})) = 1$ and 
 if $p(R_{i}) \geq x_{\lambda} + 1$, then $U_{i}(R)(x_{\lambda}+1)= \lambda - x_{\lambda}$
 and $U_{i}(R)(x_{\lambda}) = 1 - (\lambda - x_{\lambda})$
The solution thus obtained recursively is the egalitarian allocation for the
suppliers. A similar construction works for demanders: We consider the
parameterized capacity graph $\Delta (\mu ),\mu \geq 0$, with the capacity
of the edge $\tau j,j\in D$ set to $\mu \wedge d_{j}$. We look for the first
type-2 breakpoint $\mu ^{\ast }$ of the maximal flow $\beta (\mu )$ of $%
\Delta (\mu )$, and for the largest subset of demanders $Y$ such that%
\begin{equation*}
\sum_{j\in Y}\mu \wedge d_{j}=\sum_{i\in g(Y)}s_{i}
\end{equation*}%
etc.. Combining these two egalitarian allocations yields the egalitarian
allocation $(x^{e},y^{e})\in 
%TCIMACRO{\U{211d} }%
%BeginExpansion
\mathbb{R}
%EndExpansion
_{+}^{S\cup D}$ for the overall problem.\medskip

%\textbf{Can you prove a characterization of indivisible exchange and its properties?}
%like the klaus et al. paper
%\subsection{Indivisible exchange on bipartite networks: Probabilistic Edge Fair Rule}
%\subsection{Indivisible Consistent exchange on bipartite networks: Proportional, ug/ul}

\newpage
\nocite{*}
\bibliographystyle{plain}
\bibliography{refs}

\end{document}